\documentclass[aps,prx,reprint,superscriptaddress,nofootinbib,floatfix]{revtex4-2}

\usepackage[T1]{fontenc}
\usepackage[utf8]{inputenc}
\usepackage{amsmath,amssymb,mathtools,bm}
\usepackage{amsthm}
\usepackage{graphicx}
\usepackage{booktabs}
\usepackage{array}
\usepackage{microtype}
\usepackage{hyperref}
\usepackage{xcolor}

\hypersetup{
colorlinks=true,
linkcolor=black,
citecolor=black,
urlcolor=black,
pdftitle={Exact resource laws for passive entanglement networks}
}

\newtheorem{theorem}{Theorem}
\newtheorem{lemma}{Lemma}
\newtheorem{proposition}{Proposition}
\newtheorem{corollary}{Corollary}
\newtheorem{definition}{Definition}

\newcommand{\erf}{\operatorname{erf}}

\newcommand{\Oset}{\mathcal O}
\newcommand{\Rset}{\mathcal R}
\newcommand{\Kset}{\mathcal K}
\newcommand{\Dset}{\mathcal D}
\newcommand{\Sset}{\mathcal S}
\newcommand{\Bic}{\operatorname{Bic}}
\newcommand{\eps}{\varepsilon}
\newcommand{\Rconj}{R_{\rm conj}}
\newcommand{\etast}{\eta_{\rm st}}

\begin{document}

\title{Exact Resource Laws for Passive Wavelength Routing in Entanglement Networks}

\author{Ekta Panwar}
\thanks{ekta.panwar@savba.sk}
\author{Gilberto Borges}
\author{Saeid Salari}
\author{Kartik Kakade}
\author{Samgeeth Puliyil}
\author{Peter Rap\v{c}an}
\author{Mario Ziman}
\author{Djeylan Aktas}

\affiliation{Institute of Physics, Slovak Academy of Sciences, 841 04 Bratislava, Slovakia}

\date{August 30, 2026}

\begin{abstract}
Entanglement-based networks provide a scalable framework for multiuser quantum communication by passively routing spectrally correlated photon pairs across interconnected nodes. Several wavelength-allocation schemes have already been demonstrated experimentally, but these designs do not yet give a general way to determine how spectral use, receiver load, repeated connections, and fan-out constrain one another. We address this problem directly through the network's connectivity graph, where the wavelength assignment becomes a resource-optimization problem. For one-sided fan-out, assigning each link to a center and grouping links with the same center gives an exact optimization for arbitrary networks and fan-out limits. We solve this explicitly for complete networks and for complete networks in which every user has one excluded partner. Allowing both conjugate wavelengths to fan out changes the resource landscape: a balanced binary hierarchy attains the minimum spectral-layer count for a complete network while reducing the maximum receiver load to logarithmic in the number of users. An eight-user complete network then makes explicit the competing roles of spectral efficiency, receiver load, redundancy, and fan-out. We then include the passive-splitter loss and the dependence of the key rate on the delivered pair flux to determine the minimum total pair-generation rate required to meet the prescribed targets. Finally, we formulate the corresponding BBM92 quantum key distribution (QKD) secret-key-rate analysis for a continuous-wave-pumped broadband source, with true and accidental coincidences evaluated between detector channels at the two endpoint users and relative layer pair-generation rates fixed by the source spectrum. This framework, therefore, provides a direct route from exact network resource laws to the design and comparison of passive entanglement architectures under experimentally specified hardware constraints.
\end{abstract}

\maketitle

\section{Introduction}
Rapid advancements in second-generation quantum technologies have accelerated the development of quantum networks (QNs), providing a critical framework to interconnect quantum devices. These networks have already enabled real-world quantum key distribution (QKD) deployments~\cite{neumann2022continuous,Joshi2020}, while also supporting distributed quantum sensing~\cite{malia2022distributed} and distributed quantum computing~\cite{main2025distributed}.

A fundamental challenge in the realization of QNs is the architectural design of fully connected, or fully meshed, topologies that circumvent the reliance on intermediary nodes, thereby eliminating the trusted-node assumption. A prominent strategy to achieve these fully meshed architectures relies on the distribution of entangled photons by exploiting the spectral correlations inherent to second-order nonlinear light-matter interactions~\cite{Joshi2020,Wengerowsky2018}. These configurations leverage standard telecommunication technologies for spectral multiplexing (MUX) and demultiplexing (DEMUX), hereafter referred to as \emph{wavelength-demultiplexed quantum networks (DEMUX-QNs)}. The broadband spectrum is demultiplexed into discrete wavelength channels, with correlated photon pairs distributed symmetrically around a central wavelength and routed among network users~\cite{neumann2022continuous,Joshi2020,Wengerowsky2018,Liu2022}.

Following the pioneering demonstration of a DEMUX-QN~\cite{Wengerowsky2018}, subsequent architectures have improved scalability without relying on trusted nodes~\cite{Joshi2020,Liu2022}, while theoretical models have been developed to characterize such networks~\cite{BathaeeSalehi2023,BathaeePRAformalism}. Nevertheless, a rigorous analytical treatment focused on optimizing fundamental resources, such as the wavelength channels required to fully connect $N$ users, remains lacking. The present manuscript addresses this resource allocation challenge by expanding upon the wavelength assignment formulation initiated in Ref.~\cite{companionB} and representing each correlated wavelength pair as a graph layer, defined herein as a spectral layer.

The simplest way to use these layers is to assign one spectral layer to every requested user link~\cite{Wengerowsky2018}. The two conjugate wavelengths are routed directly to the two users forming the link, and no wavelength pair is shared with any other connection. Since a complete network of $N$ users has one link between every pair, the number of required spectral layers grows as $N(N-1)/2$.

Instead of reserving a conjugate wavelength pair for a single link, one or both wavelengths can be distributed among several users. If one wavelength is sent to a single user while its conjugate is split among several others, that layer connects one center user to several leaves and forms a star. If both conjugate wavelengths are split, every user receiving one wavelength is connected to every user receiving its conjugate, and the layer forms a biclique. Figure~\ref{fig:optical-overview} illustrates the pairwise assignment, one-sided fan-out, and the resulting wavelength-channel scaling. These architectures have already enabled fully connected entanglement networks without trusted relays or active wavelength switching~\cite{Joshi2020,Wengerowsky2018,Liu2022,Appas2021,Fan2025,Wang2026Dynamic}.

These demonstrations show that passive sharing is experimentally viable, but the same sharing also divides photon flux among more output branches, can introduce additional splitter loss, increase the wavelength load at a user, or repeat a requested link. For a BBM92 network, these choices affect the pair flux and secret-key rate available to each user pair.

This leads to two questions that we address in this work. \emph{First, for a prescribed multiuser network, what limits on the number of spectral layers, receiver wavelength load, repeated connections, and fan-out follow from the network topology alone? Second, once splitter loss and prescribed key-rate targets are included, do the architectures that attain these topological limits remain preferable?}

To address the first question, we use the graph already implicit in the wavelength assignment. A user becomes a vertex, while a requested pairwise connection becomes an edge. The layer decomposition then directly specifies the spectral demand, receiver wavelength load, fan-out, and repeated connections.

We first apply this description to one-sided fan-out. For each requested link, we choose the endpoint that receives the unsplit wavelength and therefore serves as the center. Links assigned to the same center can be grouped into one spectral layer, up to the allowed fan-out. In graph terms, this choice is an orientation of the requested network, which leads to an exact optimization of the required layer count. We solve the resulting layer--fan-out tradeoff for complete networks and for networks in which every user has one excluded partner. The same construction gives balanced layer sizes, which later determine the extra splitting loss when a layer serves more users.

With the one-sided case established, we next ask what changes when both conjugate wavelengths are allowed to split. A single spectral layer can then connect two groups of users rather than one center to a set of leaves, separating the layer count from the wavelength load at a receiver. For a complete network, we construct a balanced binary hierarchy that uses the minimum possible number of layers while the maximum receiver load grows only logarithmically with the number of users. For eight users, the hierarchy uses seven layers with maximum receiver load three, whereas every seven-star partition has maximum load seven. When each wavelength is limited to two users, eight layers suffice only if four links are repeated; eliminating these repetitions requires a ninth layer. Thus, minimizing the number of layers does not necessarily minimize receiver load, fan-out, or repeated connections.

Having separated the topological tradeoffs, we next ask which architectures remain favorable once key generation is included. We combine graph decomposition with physically realizable splitter configurations and a BBM92 model that relates the delivered entangled-pair flux to the secret-key rate. For prescribed key-rate targets and independently adjustable layer rates, fewer spectral layers or a smaller receiver load need not imply a lower total pair-generation requirement. For eight users, we derive the splitter-transmission threshold at which the seven-layer hierarchy and the eight-layer repeated cover exchange order. We then consider a single CW-pumped broadband source, in which the source spectrum fixes the relative rates of the spectral layers and one detector record may enter several pairwise coincidence analyses within a fan-out layer, coupling the link responses through accidental coincidences.

Star decompositions, biclique covers, and biclique partitions are well-established objects in graph theory~\cite{Tarsi1981,LinShyu1996,CameronHorsley2020,Harangi2025,GrahamPollak1971,KatonaSzemeredi1967,Shader1993,Gu2015,Pinto2014}. Here, we use this connection to determine how a passive wavelength assignment fixes concrete network resources and to compare the spectral advantage of an architecture with the optical cost required to realize it.

\begin{figure*}[t]
\centering
\includegraphics[width=0.98\textwidth]{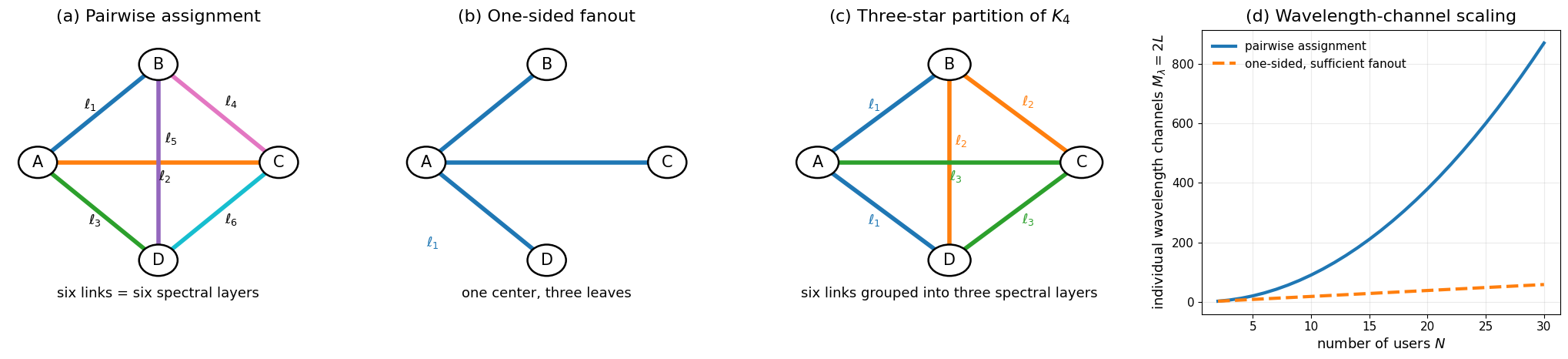}
\caption{A passive wavelength assignment and its graph representation. In panel (a), the six links of $K_4$ are labelled as six separate spectral layers. Panel (b) shows one star layer. In panel (c), equal colours and layer labels identify the two edges grouped into each of three star layers, making the layer partition distinct from the pairwise assignment. The final panel compares the individual wavelength-channel count $M_\lambda=2L$ for the pairwise assignment and the minimum one-sided assignment with sufficient fan-out; the two curves also use different line styles.}
\label{fig:optical-overview}
\end{figure*}

\section{Optical layers and graph representation}
\label{sec:routing_algebra}

We begin with the experimental description of a passive multiuser entanglement network. The source produces pairs of conjugate wavelength channels, and we call each such pair a \emph{spectral layer}. For every spectral layer, the optical network routes one conjugate wavelength channel to a nonempty set of users and the other to a second nonempty set. We then translate this wavelength assignment into a graph-theoretic description and use it throughout the paper to determine the required spectral and receiver resources.

Let $G=(V,E)$ denote the requested entanglement network. The vertex set $V$ represents the $n=|V|$ users, and the edge set $E$ contains the $m=|E|$ requested pairwise links. Suppose that the network uses $L$ spectral layers, indexed by
\begin{equation}
\ell\in\{1,\ldots,L\}.
\end{equation}
Each spectral layer contains two conjugate wavelength channels, so the total number of individual wavelength channels is
\begin{equation}
M_\lambda=2L.
\end{equation}
We label these channels by $\mu,\nu\in\{1,\ldots,M_\lambda\}$. For each channel $\mu$, its conjugate partner is denoted by $\bar\mu$, with
\begin{equation}
\overline{\bar\mu}=\mu.
\end{equation}

We now consider a spectral layer $\ell$. Let its two conjugate wavelength channels be delivered to disjoint, nonempty user sets $A_\ell$ and $B_\ell$:
\begin{equation}
A_\ell\cap B_\ell=\varnothing.
\label{eq:disjoint-sides}
\end{equation}
The disjointness condition means that no user receives both conjugate channels from the same layer. Every user in $A_\ell$ is paired with every user in $B_\ell$, while two users on the same side do not receive a conjugate pair from that layer. The realized edge set is the biclique
\begin{equation}
\Bic(A_\ell,B_\ell)
=
\bigl\{\{a,b\}:a\in A_\ell,\ b\in B_\ell\bigr\}.
\label{eq:biclique}
\end{equation}
A one-sided layer is obtained when one of the two sets contains a single user. Writing $A_\ell=\{c_\ell\}$, the layer realizes the star
\begin{equation}
S(c_\ell,B_\ell)
=
\bigl\{\{c_\ell,b\}:b\in B_\ell\bigr\}.
\label{eq:star}
\end{equation}
Thus, every spectral layer considered here is a biclique, and a star is its one-sided special case.

We now define the family of layer graphs as
\begin{equation}
\mathcal P
=
\bigl\{\Bic(A_\ell,B_\ell)\bigr\}_{\ell=1}^{L}.
\end{equation}
This family is realized by the $L$ spectral layers. For a requested edge $e\in E$, let $m_{\mathcal P}(e)$ denote the number of spectral layers in which that edge appears. The assignment covers $G$ when every requested edge appears at least once,
\begin{equation}
m_{\mathcal P}(e)\ge 1
\qquad(e\in E),
\end{equation}
and no layer produces an edge outside $E$. The layer graphs then realize exactly the requested network, although some requested links may occur more than once.

We call the cover \emph{nonredundant} when every requested edge appears in exactly one spectral layer:
\begin{equation}
m_{\mathcal P}(e)=1
\qquad(e\in E).
\label{eq:nonredundant}
\end{equation}
In this case, the layer graphs form an edge partition of $G$. For a general cover, the total number of additional edge occurrences is
\begin{equation}
\Delta(\mathcal P)
=
\sum_{e\in E}\bigl[m_{\mathcal P}(e)-1\bigr].
\label{eq:route-overhead}
\end{equation}
The first occurrence realizes the requested link, while every further occurrence contributes one unit to $\Delta(\mathcal P)$. Hence $\Delta(\mathcal P)=0$ exactly when the cover is nonredundant.

The wavelength load at user $u$ is
\begin{equation}
\chi_u(\mathcal P)
=
\bigl|\{\ell:u\in A_\ell\cup B_\ell\}\bigr|,
\qquad
\chi(\mathcal P)=\max_{u\in V}\chi_u(\mathcal P).
\label{eq:receiver-load}
\end{equation}
Thus, $\chi_u(\mathcal P)$ counts the spectral layers, and therefore the wavelength channels, received by user $u$, while $\chi(\mathcal P)$ is the load of the most heavily used receiver. We also quantify the largest splitter side through the maximum side size:
\begin{equation}
\sigma(\mathcal P)
=
\max_\ell\max\{|A_\ell|,|B_\ell|\}.
\label{eq:max-side}
\end{equation}
This is the largest number of users reached by either wavelength channel in any layer. Since $\sigma(\mathcal P)$ records only the larger side, we also specify the ordered layer type $|A_\ell|\times|B_\ell|$. In particular, a $1\times r$ layer is a star, whereas a $p\times q$ layer with $p,q>1$ uses two-sided fan-out.

With these resources defined, we now prove the converse relation: the channel-delivery matrix determines the layer bicliques, and its matrix product recovers the multiplicity of every user pair.

\subsection{Equivalence between wavelength and graph descriptions}
\label{subsec:optical-graph-equivalence}

We now define the channel-delivery matrix $Q$ and the conjugation matrix $\Rconj$ by
\begin{align}
Q_{\mu u}=1
&\Longleftrightarrow
\text{channel $\mu$ reaches user $u$},\nonumber\\
(\Rconj)_{\mu\nu}
&=\delta_{\nu,\bar\mu}.
\label{eq:QRdef}
\end{align}
Each row of $Q$ corresponds to one wavelength channel and each column to one user. The matrix $\Rconj$ records which two rows belong to the same spectral layer. Their product
\begin{equation}
C(Q)=Q^{\mathsf T}\Rconj Q
\label{eq:routing-certificate}
\end{equation}
counts conjugate-channel routes between users. We now show that this matrix description and the layer-graph description are equivalent.

\begin{proposition}[Optical--graph equivalence]
\label{prop:optical-graph}
Assume that no user receives both conjugate wavelength channels from the same spectral layer. Then each layer of $Q$ induces the biclique in Eq.~\eqref{eq:biclique}. Moreover,
\begin{align}
L&=\frac{M_\lambda}{2},
\label{eq:resource-L}\\
\chi_u(\mathcal P)&=\sum_{\mu=1}^{M_\lambda}Q_{\mu u},
\label{eq:resource-chi}\\
\Delta(\mathcal P)&=\frac12\sum_{u,v\in V}
\bigl[C_{uv}(Q)-A_{uv}(G)\bigr]
\label{eq:resource-delta}
\end{align}
for every cover of $G$. The assignment is nonredundant if and only if
\begin{equation}
Q^{\mathsf T}\Rconj Q=A(G).
\label{eq:adj-cert}
\end{equation}
\end{proposition}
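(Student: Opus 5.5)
The plan is to compute the entries of $C(Q)=Q^{\mathsf T}\Rconj Q$ directly and organise the sum by spectral layer. Each layer $\ell$ corresponds to a conjugate pair of rows $(\mu_\ell,\bar\mu_\ell)$, whose supports are $A_\ell=\{u:Q_{\mu_\ell u}=1\}$ and $B_\ell=\{u:Q_{\bar\mu_\ell u}=1\}$. The hypothesis that no user receives both conjugate channels of a layer is exactly Eq.~\eqref{eq:disjoint-sides}. Expanding the product gives
\begin{equation}
C_{uv}(Q)=\sum_{\mu}Q_{\mu u}Q_{\bar\mu v}
=\sum_{\ell=1}^{L}\bigl(Q_{\mu_\ell u}Q_{\bar\mu_\ell v}+Q_{\bar\mu_\ell u}Q_{\mu_\ell v}\bigr).
\end{equation}
The $\ell$-th summand equals $1$ precisely when $\{u,v\}$ has one endpoint in $A_\ell$ and the other in $B_\ell$, and equals $0$ otherwise. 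Both terms cannot be $1$ at once, because that would put $u$ in $A_\ell\cap B_\ell$.

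From this expansion the layer structure follows. The pairs created by layer $\ell$ are exactly the edges of $\Bic(A_\ell,B_\ell)$ in Eq.~\eqref{eq:biclique}, so each layer of $Q$ induces that biclique. For $u\neq v$ we get $C_{uv}(Q)=m_{\mathcal P}(\{u,v\})$, and $C_{uu}(Q)=0$ by disjointness. I will check the diagonal step carefully, since it is the only place where the hypothesis is needed for the identity to hold.

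The resource identities then follow in turn. Eq.~\eqref{eq:resource-L} is immediate, because the rows of $Q$ are partitioned into $L$ conjugate pairs by $\Rconj$, which is an involutive permutation matrix without fixed points. For Eq.~\eqref{eq:resource-chi}, the column sum $\sum_\mu Q_{\mu u}$ counts the channels reaching $u$. Disjointness means at most one channel per layer reaches $u$, so this sum counts layers with $u\in A_\ell\cup B_\ell$, which is $\chi_u$ by Eq.~\eqref{eq:receiver-load}.

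For Eq.~\eqref{eq:resource-delta}, assume $\mathcal P$ is a cover of $G$. Then no layer produces a non-edge, so $C_{uv}=0$ whenever $A_{uv}(G)=0$, and $C_{uv}=m_{\mathcal P}(e)$ for each edge $e=\{u,v\}$. The diagonal contributes nothing. Summing over ordered pairs counts every edge twice, which gives
\begin{equation}
\frac12\sum_{u,v}\bigl[C_{uv}-A_{uv}\bigr]=\sum_{e\in E}\bigl[m_{\mathcal P}(e)-1\bigr]=\Delta(\mathcal P).
\end{equation}

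For the final equivalence I argue in both directions. If the assignment is nonredundant, then $m_{\mathcal P}\equiv1$ on $E$ and $C$ vanishes on non-edges and the diagonal, so $C=A(G)$ entrywise. Conversely, if $C=A(G)$, the off-diagonal identity shows that every edge has multiplicity exactly one and that no non-edge is produced. Hence $Q$ is a cover satisfying Eq.~\eqref{eq:nonredundant}.

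The main difficulty is bookkeeping rather than any deep step. It consists of correctly accounting for the factor $2$ coming from the symmetric double counting in $\Rconj$ and over ordered pairs, and of making explicit where disjointness is used: in the diagonal entries and in $\chi_u$.
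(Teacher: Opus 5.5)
Your proof is correct and takes the same route as the paper: it reads $C_{uv}(Q)=\sum_\mu Q_{\mu u}Q_{\bar\mu v}$ as the number of layers that place $u$ and $v$ on opposite sides, uses disjointness to identify column sums with layer incidences, and subtracts $A(G)$ with a factor $\tfrac12$ for ordered pairs. Your per-layer expansion and explicit diagonal check make rigorous what the paper states in words; one small correction is that your remark calling the diagonal the only place the hypothesis matters undersells your own earlier step, since disjointness is also what stops a single layer from contributing $2$ to an off-diagonal entry.
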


\begin{proof}
We prove the claim by reading one entry of the routing certificate. For fixed users $u$ and $v$,
\begin{equation}
C_{uv}(Q)
=
\sum_{\mu=1}^{M_\lambda}Q_{\mu u}Q_{\bar\mu v}
\label{eq:routing-entry}
\end{equation}
counts the spectral layers in which $u$ and $v$ receive conjugate wavelength channels. The two recipient sets in one layer therefore generate a biclique. Since these sets are disjoint, a user receives at most one channel from each layer, and the column sum in Eq.~\eqref{eq:resource-chi} is its layer incidence. For a cover, subtracting the adjacency matrix removes the first required occurrence of every requested edge. The remaining symmetric entries count repetitions, and the factor $1/2$ removes double counting. Equation~\eqref{eq:adj-cert} is precisely the condition that every requested edge occurs once and no other edge occurs.
\end{proof}

\subsection{A four-user example}
\label{subsec:four-user-example}

The equivalence is easiest to see in a four-user network. We first group the six requested links into three stars and then read the same three spectral layers from the rows of the channel-delivery matrix.

For users $A,B,C,D$, the complete graph $K_4$ is partitioned by
\begin{equation}
S(A,\{B,D\}),\qquad
S(B,\{C,D\}),\qquad
S(C,\{A,D\}).
\label{eq:K4stars}
\end{equation}
Their edge sets are
\begin{equation}
\{AB,AD\},\qquad
\{BC,BD\},\qquad
\{AC,CD\},
\label{eq:K4groups}
\end{equation}
which contain all six edges exactly once.

Order the six channel rows so that the conjugate pairs are $(1,6)$, $(2,5)$, and $(3,4)$, and take
\begin{equation}
Q_4=
\begin{pmatrix}
1&0&0&0\\
0&1&0&0\\
0&0&1&0\\
1&0&0&1\\
0&0&1&1\\
0&1&0&1
\end{pmatrix},
\label{eq:Q4}
\end{equation}
with columns ordered as $A,B,C,D$. Rows $1$ and $6$ route conjugate channels to $A$ and $\{B,D\}$, respectively; the pairs $(2,5)$ and $(3,4)$ similarly realize the other two stars in Eq.~\eqref{eq:K4stars}. Thus the three conjugate row pairs reproduce the star decomposition before any matrix multiplication is performed. Since $(\Rconj)_{\mu,7-\mu}=1$ for this ordering,
\begin{equation}
Q_4^{\mathsf T}\Rconj Q_4=J_4-I_4,
\label{eq:Q4check}
\end{equation}
where $J_4$ is the all-ones matrix and $I_4$ is the identity; their difference is the adjacency matrix of $K_4$. The star decomposition groups the requested links into spectral layers, while $Q_4$ records the delivered wavelength channels. Both describe the same physical assignment.

The same three-layer structure, comprising six wavelength channels forming three conjugate pairs distributed among four users, underlies the entanglement-based segment of the Slovak Quantum Communication Infrastructure, in which a single broadband source distributes three conjugate wavelength pairs over deployed telecommunication fibre to four city nodes. That deployment, its measured performance, and the specialization of the present framework to its routing table are reported in a companion paper~\cite{companionB}.

We use the matrix certificate to check complete assignments and the graph representation to optimize them. The standard local biclique parameters are~\cite{Pinto2014}
\begin{align}
\operatorname{lbp}(G)
&=
\min_{\mathcal P\text{ a biclique partition of }G}
\chi(\mathcal P),
\label{eq:lbp}\\
\operatorname{lbc}(G)
&=
\min_{\mathcal P\text{ a biclique cover of }G}
\chi(\mathcal P).
\label{eq:lbc}
\end{align}
The first forbids repeated edges; the second permits them. Thus, the receiver wavelength load introduced above is exactly the local biclique incidence studied in graph theory.

The graph description applies to both one-sided and two-sided fan-out. We first turn to the one-sided case, in which one conjugate wavelength channel is delivered to a single center user and the other is distributed among a set of leaves.

\section{One-sided architectures}
\label{sec:one-sided}

We first fix the optical meaning of a one-sided layer. In each spectral layer, the user receiving the unsplit wavelength channel is the \emph{center}, and the users receiving its conjugate form the \emph{leaf set}. If one layer may reach at most $r$ leaves, then
\begin{equation}
1\le |B_\ell|\le r,
\end{equation}
and the layer realizes the star $S(c_\ell,B_\ell)$ with at most $r$ requested edges.

A nonredundant one-sided assignment is an edge-disjoint star decomposition,
\begin{equation}
E(G)=\mathop{\dot\bigcup}_{\ell=1}^{L}S(c_\ell,B_\ell).
\label{eq:star-decomp}
\end{equation}
Let $L_{\min}(G,r)$ be the minimum number of nonempty star layers with at most $r$ edges per star. For a star decomposition $\Dset$, define
\begin{align}
\rho(\Dset)&=\max_{S\in\Dset}|E(S)|,\\
\rho_G(L)&=
\min_{\substack{\Dset\text{ a star decomposition of }G\\|\Dset|=L}}
\rho(\Dset).
\label{eq:rho-definition}
\end{align}
Thus, $\rho_G(L)$ is the smallest splitter fan-out compatible with exactly $L$ nonempty layers.

We now choose one endpoint of each requested edge as its center and orient the edge away from that endpoint. For an orientation $D$ of $G$, let $\delta_D^+(v)$ be the set of edges directed out of $v$ and let $d_D^+(v)=|\delta_D^+(v)|$. The edges in $\delta_D^+(v)$ are precisely those assigned to star layers centered at $v$. For a finite set $X$, let $\Pi_r(X)$ denote the set of partitions of $X$ into nonempty blocks of size at most $r$, with $\Pi_r(\varnothing)$ containing only the empty partition.

A choice $\mathcal P_v\in\Pi_r(\delta_D^+(v))$ specifies the layers centered at $v$: each block $S\in\mathcal P_v$ forms one star. Let $c(v,S)\ge0$ be the additive cost of using this block as a layer. Depending on the model, it may count one occupied spectral layer, the required pair-generation rate, splitter loss, or another engineering resource.

The orientation records which user is responsible for each requested edge, while the block partitions record how those edges are placed into physical layers. We can therefore optimize the complete one-sided assignment by optimizing these two choices.

\begin{theorem}[Orientation-block characterization]
\label{thm:master-one-sided}
For every nonnegative block cost $c$, the minimum total cost over all nonredundant one-sided assignments with fan-out at most $r$ is
\begin{equation}
C_{G,r}^{\star}[c]
=
\min_{D\in\Oset(G)}
\sum_{v\in V}
\min_{\mathcal P_v\in\Pi_r(\delta_D^+(v))}
\sum_{S\in\mathcal P_v}c(v,S).
\label{eq:master-one-sided}
\end{equation}
\end{theorem}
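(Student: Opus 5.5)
The plan is to exhibit a cost-preserving correspondence between nonredundant one-sided assignments with fan-out at most $r$ and pairs $\bigl(D,(\mathcal P_v)_{v\in V}\bigr)$, where $D\in\Oset(G)$ and $\mathcal P_v\in\Pi_r(\delta_D^+(v))$. The minimum of the total cost on one side then equals the minimum on the other. The right-hand side of Eq.~\eqref{eq:master-one-sided} is this minimum written as an iterated minimum. Once $D$ is fixed, the block partitions at different vertices are chosen independently and the cost is additive over vertices, so the inner minimum splits into a sum of per-vertex minima. All minima are over finite sets: there are finitely many orientations and finitely many partitions of each finite set. Hence they are attained, and nonnegativity of $c$ is needed only so that the cost is well defined and finite.

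\emph{From assignments to orientation data.} Take a nonredundant one-sided assignment, that is, an edge-disjoint star decomposition as in Eq.~\eqref{eq:star-decomp} with layers $S(c_\ell,B_\ell)$ and $1\le|B_\ell|\le r$. Each requested edge lies in exactly one layer by Eq.~\eqref{eq:nonredundant}. Orient that edge away from the center $c_\ell$ of its layer. This gives an orientation $D$ in which $\delta_D^+(v)$ is the disjoint union of the edge sets of the layers centered at $v$. Those edge sets are nonempty, pairwise disjoint, and have size $|B_\ell|\le r$, so they form some $\mathcal P_v\in\Pi_r(\delta_D^+(v))$. If no layer is centered at $v$, then $\delta_D^+(v)=\varnothing$ and $\mathcal P_v$ is the empty partition. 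The cost of the assignment is $\sum_\ell c(c_\ell,\cdot)$ evaluated on each layer, and this equals $\sum_v\sum_{S\in\mathcal P_v}c(v,S)$.

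\emph{From orientation data to assignments.} Conversely, given $D$ and partitions $\mathcal P_v\in\Pi_r(\delta_D^+(v))$, every block $S\in\mathcal P_v$ is a nonempty set of at most $r$ edges $vw$, each directed out of $v$. Such a block is the star $S(v,\{w:vw\in S\})$. It is realizable as a one-sided layer because $v$ is not among its leaves, so the sides are disjoint as Eq.~\eqref{eq:disjoint-sides} requires. The out-sets $\delta_D^+(v)$ partition $E$, since every edge has exactly one tail. The blocks therefore partition $E$, which gives a nonredundant cover that uses only requested edges, again with the same cost.

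The main subtlety is the identification of cost with blocks. The cost $c(v,S)$ must depend only on the center and the edge set of the layer, and a layer must be determined by these two data. The latter holds because the edge set determines the leaf set. A second point arises when $|S|=1$: a single edge $\{v,w\}$ may be viewed as a star centered at either endpoint. This causes no ambiguity, because the assignment fixes which user receives the unsplit channel, and that choice is exactly the orientation. I expect nothing harder than this bookkeeping. With both maps in hand, each direction gives an inequality between the two minima, and together they prove Eq.~\eqref{eq:master-one-sided}.
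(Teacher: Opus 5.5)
Your proposal is correct and follows essentially the same route as the paper's proof. You orient each edge away from the center of the unique layer containing it, so the stars centered at $v$ form a partition in $\Pi_r(\delta_D^+(v))$; conversely, you turn each block of $\mathcal P_v$ into a star centered at $v$, use the unique tail of each edge to get nonredundancy, and note that cost is preserved in both directions. Your extra bookkeeping (finiteness of the minima, per-vertex splitting of the inner minimum, the $1\times1$ center convention) makes explicit what the paper leaves implicit; one small correction is that nonnegativity of $c$ is not what makes the sums finite, and the exact correspondence gives the identity for any real-valued block cost.
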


\begin{proof}
We will prove the two directions separately. A nonredundant star decomposition assigns every edge to one center. Orienting the edge away from this center gives an orientation $D$. For each vertex $v$, the stars centered at $v$ partition $\delta_D^+(v)$ into blocks of size at most $r$, and their additive cost is the expression on the right-hand side.

Conversely, an orientation and one block partition at each vertex define a star for every block. Every edge has exactly one tail and therefore belongs to exactly one star. The resulting assignment is nonredundant and has the same cost.
\end{proof}

We now set $c(v,S)=1$ to recover the spectral-layer problem. For a fixed orientation, the smallest number of blocks at $v$ is $\lceil d_D^+(v)/r\rceil$. Hence
\begin{equation}
\Phi_G(r)
=
\min_{D\in\Oset(G)}
\sum_{v\in V}
\left\lceil\frac{d_D^+(v)}{r}\right\rceil.
\label{eq:Phi}
\end{equation}

\begin{corollary}[Minimum layer count from graph orientations]
\label{thm:orientation}
The minimum number of nonempty star layers with fan-out at most $r$ is
\begin{equation}
L_{\min}(G,r)=\Phi_G(r).
\label{eq:Lorientation}
\end{equation}
For every feasible layer count $\tau(G)\le L\le m$, the smallest possible maximum star size is
\begin{equation}
\rho_G(L)
=
\min\{r\in\mathbb N:\Phi_G(r)\le L\}.
\label{eq:rhoorientation}
\end{equation}
\end{corollary}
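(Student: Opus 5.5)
The first identity follows by specializing Theorem~\ref{thm:master-one-sided} to the unit block cost $c(v,S)=1$. Fix an orientation $D$ and a vertex $v$. A partition of $\delta_D^+(v)$ into nonempty blocks of size at most $r$ has at least $\lceil d_D^+(v)/r\rceil$ blocks, because the block sizes add up to $d_D^+(v)$. This bound is attained by cutting $\delta_D^+(v)$ into consecutive blocks of size $r$ with one remainder block. If $d_D^+(v)=0$, the empty partition gives zero blocks, which agrees with $\lceil 0/r\rceil=0$. The inner minimum in Eq.~\eqref{eq:master-one-sided} is therefore $\lceil d_D^+(v)/r\rceil$. The theorem says that the minimum over nonredundant star assignments with fan-out at most $r$ of the number of layers equals $\Phi_G(r)$. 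Every block is nonempty, so every layer counted is a nonempty star, and this gives $L_{\min}(G,r)=\Phi_G(r)$.

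For the second identity I first need two monotonicity facts. First, $\Phi_G(r)$ is nonincreasing in $r$, since each term $\lceil d/r\rceil$ is. Second, if some star decomposition has $L_0$ stars of size at most $r$, then for every $L$ with $L_0\le L\le m$ there is one with exactly $L$ stars of size at most $r$. To get it, split off a single edge from any star with at least two edges; this raises the star count by one and never increases the maximum size. The process can continue until all $m$ stars are single edges.

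Now let $r^\ast=\min\{r:\Phi_G(r)\le L\}$. The upper bound $\rho_G(L)\le r^\ast$ comes from taking a decomposition with $\Phi_G(r^\ast)\le L$ stars of size at most $r^\ast$ and refining it to exactly $L$ stars as above. For the lower bound, suppose a decomposition $\Dset$ has $|\Dset|=L$ and $\rho(\Dset)=r$. Then it is admissible for fan-out $r$, so $\Phi_G(r)=L_{\min}(G,r)\le L$, which gives $r\ge r^\ast$.

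The step that needs care is feasibility at the lower end of the range. I read $\tau(G)$ as $L_{\min}(G,\infty)=\min_D|\{v:d_D^+(v)>0\}|$, the minimum number of stars, which is the vertex cover number. For $L\ge\tau(G)$ I take $r=\Delta(G)$ or $r=m$: then $\Phi_G(r)=\tau(G)\le L$, so the set defining $r^\ast$ is nonempty. The refinement argument handles $L$ up to $m$. The stated range of $L$ is therefore exactly the feasible range, and both identities hold on it.
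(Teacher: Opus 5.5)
Your proof is correct and follows the same route as the paper. You specialize Theorem~\ref{thm:master-one-sided} to unit block cost, so the inner minimum is $\lceil d_D^+(v)/r\rceil$; you then read the result in reverse and pad to exactly $L$ layers by splitting nontrivial stars. You also make explicit two points the paper leaves implicit, and handle both correctly: the two inequalities behind the inversion, and the nonemptiness of $\{r:\Phi_G(r)\le L\}$ via $\Phi_G(r)=\tau(G)$ for $r\ge m$.
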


Equation~\eqref{eq:Lorientation} fixes the smallest layer count for a given fan-out. Equation~\eqref{eq:rhoorientation} reads the same construction in the opposite direction: it selects the smallest fan-out for which no more than $L$ layers are needed. If the resulting decomposition has fewer than $L$ layers, splitting a nontrivial star increases the count one layer at a time without repeating an edge.

We next remove the fan-out bound. All edges assigned to the same center may then be placed in one star, and the minimum number of layers is determined by the smallest set of centers meeting every requested edge.

\begin{lemma}[Minimum layer count with unbounded fan-out]
\label{lem:tau}
Let $\tau(G)$ be the minimum vertex-cover size of $G$. Then
\begin{equation}
L_{\min}(G,\infty)=\tau(G).
\label{eq:tau}
\end{equation}
Moreover, a decomposition into exactly $L$ nonempty stars exists if and only if
\begin{equation}
\tau(G)\le L\le m.
\label{eq:Lrange}
\end{equation}
\end{lemma}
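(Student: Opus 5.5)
We prove the first claim by specializing Corollary~\ref{thm:orientation} to $r=\infty$, or equivalently Theorem~\ref{thm:master-one-sided} with unit block cost and no size restriction on blocks. With unbounded fan-out, the optimal number of blocks at a vertex $v$ is $1$ if $d_D^+(v)>0$ and $0$ otherwise. Hence
\begin{equation}
L_{\min}(G,\infty)=\min_{D\in\Oset(G)}\bigl|\{v:d_D^+(v)>0\}\bigr|.
\end{equation}
It then suffices to show that the set of vertices with positive out-degree, minimized over orientations, has size exactly $\tau(G)$.

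For the inequality $\ge$, observe that every edge has a tail, and that tail has positive out-degree. The tails therefore form a vertex cover, so the count is at least $\tau(G)$.

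For $\le$, take a minimum vertex cover $T$ and orient each edge away from an endpoint lying in $T$, choosing arbitrarily if both endpoints lie in $T$. Then only vertices of $T$ can be tails, which gives at most $|T|$ nonempty stars.

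For the range statement in Eq.~\eqref{eq:Lrange}, necessity is direct. The lower bound is the first part of the lemma. The upper bound holds because a nonredundant decomposition into nonempty stars places each edge in exactly one star, so $L\le m$.

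For sufficiency, start from a decomposition into $\tau(G)$ stars. While $L<m$, some star has at least two edges; split it into one edge and the remaining edges. This yields two nonempty stars with the same center, still edge-disjoint, and raises the count by exactly one. Induction then reaches every $L$ with $\tau(G)\le L\le m$.

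The main obstacle is not serious. The one subtle point is the convention that a star $S(c,B)$ may be a single edge, so its center is ambiguous. Since the definition allows any choice of center, this causes no issue. The other point needing care is the degenerate case $m=0$, where $\tau=0$ and the empty decomposition gives $L=0$.
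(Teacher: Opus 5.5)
Your proof is correct and follows essentially the paper's argument: the centers (tails) of a star decomposition form a vertex cover, assigning each edge to an endpoint in a minimum vertex cover gives a matching decomposition, and splitting stars one edge at a time reaches every layer count up to $m$. The only small difference is how you get exactly $\tau(G)$ nonempty stars. You combine your ``at most $|T|$'' bound with the lower bound, whereas the paper uses inclusion-minimality of the cover to give each center an edge to a vertex outside the cover, so that no star is empty.
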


\begin{proof}
We first prove the lower bound and then construct a matching decomposition. The centers of a star decomposition meet every edge and therefore form a vertex cover. Conversely, choose a minimum vertex cover $C$. Since $C$ is inclusion minimal, every $c\in C$ has a neighbor outside $C$. Assign one such edge to each $c$, and assign every remaining edge to either endpoint that lies in $C$. The assigned edges form $|C|=\tau(G)$ nonempty stars. Splitting a star with at least two edges increases the layer count by one, giving every value of $L$ up to the edge-by-edge decomposition with $m$ layers.
\end{proof}

Before specializing the graph, we record two lower bounds that do not require an orientation. The star centers form a vertex cover, and $L$ stars of fan-out at most $r$ contain at most $Lr$ edges. Therefore,
\begin{equation}
L_{\min}(G,r)
\ge
\max\left\{\tau(G),\left\lceil\frac mr\right\rceil\right\},
\qquad
\rho_G(L)\ge\left\lceil\frac mL\right\rceil.
\label{eq:basic-one-sided-bounds}
\end{equation}
The first obstruction counts the centers that must exist; the second counts the edge positions available across all layers.

\subsection{Balanced profiles in dense graphs}
\label{subsec:balanced-profiles}

We now ask when the counting bound in Eq.~\eqref{eq:basic-one-sided-bounds} can be attained. This requires the $m$ requested edges to be divided among the $L$ stars as evenly as possible. We describe that profile by writing
\begin{equation}
m=qL+s,
\qquad 0\le s<L.
\label{eq:balanced-profile-general}
\end{equation}
The balanced profile contains $L-s$ stars of size $q$ and $s$ stars of size $q+1$. Its largest star has size
\begin{equation}
r_L=\left\lceil\frac mL\right\rceil,
\label{eq:rL-definition}
\end{equation}
which is the smallest value allowed by the counting bound. The remaining question is whether the edges can be grouped into stars with these prescribed sizes. Tarsi's theorem gives a sufficient condition~\cite{Tarsi1981}.

\begin{theorem}[Balanced star partition under a minimum-degree condition]
\label{thm:dense}
For a feasible layer count $L$, if
\begin{equation}
\delta(G)\ge\frac n2+r_L-1,
\label{eq:dense-cond}
\end{equation}
then $G$ admits the balanced profile in Eq.~\eqref{eq:balanced-profile-general}, and
\begin{equation}
\rho_G(L)=\left\lceil\frac mL\right\rceil.
\label{eq:dense-law}
\end{equation}
\end{theorem}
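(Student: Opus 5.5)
The plan is to prove Eq.~\eqref{eq:dense-law} by combining the counting lower bound in Eq.~\eqref{eq:basic-one-sided-bounds} with a matching construction supplied by Tarsi's theorem~\cite{Tarsi1981}. The lower bound $\rho_G(L)\ge\lceil m/L\rceil=r_L$ is already recorded in Eq.~\eqref{eq:basic-one-sided-bounds}. It remains to exhibit a star decomposition of $G$ into exactly $L$ nonempty stars, each with at most $r_L$ edges. The balanced profile of Eq.~\eqref{eq:balanced-profile-general} provides such a target: $L-s$ stars of size $q$ and $s$ stars of size $q+1$. The sizes sum to $qL+s=m$, and the largest size is $q+[s>0]=r_L$.

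For the construction I will invoke Tarsi's theorem in the form we need. If $G$ is a simple graph on $n$ vertices and $(k_1,\dots,k_L)$ is a list of positive integers with $\sum_i k_i=m$ and $\max_i k_i\le \delta(G)-\lceil n/2\rceil+1$, then $E(G)$ decomposes into edge-disjoint stars of sizes $k_1,\dots,k_L$. The standard statement is a sufficient minimum-degree condition of exactly the type in Eq.~\eqref{eq:dense-cond}, namely $\delta(G)\ge n/2+\max_i k_i-1$. With $k_i$ taken from the balanced profile, $\max_i k_i=r_L$, so Eq.~\eqref{eq:dense-cond} is precisely the hypothesis needed.

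Two side conditions must be checked. First, all parts must be positive, so $q\ge1$. This holds because feasibility of $L$ (Lemma~\ref{lem:tau}) gives $L\le m$. Second, the decomposition has exactly $L$ nonempty stars. Its maximum star size is therefore $r_L$, which gives $\rho_G(L)\le r_L$, and together with the lower bound this yields equality.

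The main obstacle is matching the exact form of the cited theorem to Eq.~\eqref{eq:dense-cond}. Tarsi's result is often stated for complete graphs or for multigraphs, and with $\lceil n/2\rceil$ or $\lfloor n/2\rfloor$ conventions, so I would first pin down the version that applies to general simple $G$ and to arbitrary prescribed size lists. If only the uniform-size version is available, the fallback is to obtain the unequal profile by refinement. I would start from a decomposition with star sizes in $\{q,q+1\}$, or use an orientation with out-degrees $d^+(v)$ and split each vertex's out-edges into blocks of the prescribed sizes. The point to verify is that the degree condition lets the out-degree sequence be chosen so that every $d^+(v)$ is an exact sum of the assigned block sizes. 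This can be done through Hakimi's orientation criterion, using Theorem~\ref{thm:master-one-sided} with unit costs.
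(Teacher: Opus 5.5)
Your overall plan matches the paper's: the lower bound from Eq.~\eqref{eq:basic-one-sided-bounds}, the balanced list with $q\ge1$ because $L\le m$, and then a prescribed-star theorem. The gap is at exactly the step you call the main obstacle. You invoke Tarsi's theorem as if its hypothesis were the minimum-degree condition $\delta(G)\ge n/2+\max_i k_i-1$. The prescribed-star theorem the paper relies on is instead a cut condition, $\max_i k_i\le\varphi(G)$ with $\varphi(G)=\min_{\varnothing\ne X\subsetneq V}\tfrac12\bigl(\tfrac1{|X|}+\tfrac1{n-|X|}\bigr)e(X,V\setminus X)$. Deriving $\varphi(G)\ge r_L$ from Eq.~\eqref{eq:dense-cond} is the entire content of the theorem, so asserting a minimum-degree version as ``the standard statement'' leaves nothing proved.

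The missing argument is short:
\begin{itemize}
\item The cut ratio is symmetric in $X$ and $V\setminus X$, so take $x=|X|\le n/2$.
\item Since $\sum_{v\in X}\deg v=2e(G[X])+e(X,V\setminus X)$ and $2e(G[X])\le x(x-1)$, you get $e(X,V\setminus X)\ge x(\delta-x+1)$.
\item The cut ratio is therefore at least $n(\delta-x+1)/[2(n-x)]\ge n(n/2+r_L-x)/[2(n-x)]$.
\item Finally, $n(n/2+r_L-x)-2r_L(n-x)=(n-2r_L)(n/2-x)\ge0$. Here $r_L\le n/2$ follows from Eq.~\eqref{eq:dense-cond} together with $\delta(G)\le n-1$.
\end{itemize}
Hence $\varphi(G)\ge r_L$, and Tarsi's theorem realizes the balanced list.

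Your fallback does not close the gap either. Starting from a decomposition with sizes in $\{q,q+1\}$ presupposes the conclusion. The orientation route via Hakimi's criterion needs an assignment of the $L$ prescribed blocks to centers whose out-degree vector satisfies $\sum_{v\in X}d^+(v)\ge e(G[X])$ for every $X\subseteq V$. Showing that the degree condition makes such an assignment possible is essentially reproving Tarsi's theorem, and you give no argument for it.
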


We use the theorem through its minimum-degree condition. In Appendix~\ref{app:dense-proof}, we prove that this condition implies the cut-expansion hypothesis required by the prescribed-star theorem.

We now apply these results to fully connected networks. This case is important experimentally because every pair of users must share an entangled link, and the symmetry of the complete graph allows the layer--fan-out tradeoff to be determined in closed form.

\subsection{Complete and one-excluded-partner networks}
\label{subsec:complete-networks}

For $N$ users, let
\begin{equation}
E_N=\binom N2=\frac{N(N-1)}2
\label{eq:EN}
\end{equation}
be the number of requested links in $K_N$. Since $\tau(K_N)=N-1$, at least $N-1$ star layers are required. The capacity bound gives the second restriction $L\ge\lceil E_N/r\rceil$. For complete graphs, these two obstructions are sufficient.

\begin{theorem}[Exact layer--fan-out tradeoff for $K_N$]
\label{thm:complete}
For $N-1\le L\le E_N$,
\begin{equation}
\rho_{K_N}(L)=\left\lceil\frac{E_N}{L}\right\rceil.
\label{eq:complete-rho}
\end{equation}
Equivalently, for every integer $r\ge1$,
\begin{equation}
L_{\min}(K_N,r)
=
\max\left\{N-1,\left\lceil\frac{E_N}{r}\right\rceil\right\}.
\label{eq:complete-L}
\end{equation}
\end{theorem}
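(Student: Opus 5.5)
The plan is to prove the two statements together. The matching lower bounds are already available, so all the work lies in a construction. By Eq.~\eqref{eq:basic-one-sided-bounds}, $\rho_{K_N}(L)\ge\lceil E_N/L\rceil$ and $L_{\min}(K_N,r)\ge\max\{N-1,\lceil E_N/r\rceil\}$, using $\tau(K_N)=N-1$. Equation~\eqref{eq:complete-L} follows from Eq.~\eqref{eq:complete-rho} through Corollary~\ref{thm:orientation} and the monotonicity of $\Phi_G$ in $r$. Indeed, take $L^\ast=\max\{N-1,\lceil E_N/r\rceil\}$. Then $\lceil E_N/L^\ast\rceil\le r$, so a decomposition into $L^\ast$ stars of size at most $r$ exists. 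It therefore suffices to show that for every $N-1\le L\le E_N$, $K_N$ admits an edge-disjoint decomposition into exactly $L$ stars with the balanced profile of Eq.~\eqref{eq:balanced-profile-general}. This profile has $L-s$ stars of size $q$ and $s$ stars of size $q+1$, where $E_N=qL+s$.

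For the construction I would not go through Theorem~\ref{thm:dense}. Its degree condition $N-1\ge N/2+r_L-1$ holds only when $r_L\le N/2$, and this fails, for example, for odd $N$ with $L=N-1$. Instead I would invoke Tarsi's characterization of star decompositions of complete graphs~\cite{Tarsi1981}. A sequence $m_1\ge\cdots\ge m_L\ge1$ with $\sum_i m_i=E_N$ is realized by an edge-disjoint star decomposition of $K_N$ if and only if
\begin{equation}
\sum_{i=1}^{k}m_i\le\sum_{i=1}^{k}(N-i)=:f(k)
\qquad\text{for all }1\le k\le \min\{L,N-1\}.
\end{equation}
For $k\ge N-1$ the right-hand side already equals $E_N$, so those constraints are automatic. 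The hypothesis $L\ge N-1$ enters exactly here: the profile needs at most $N-1$ nontrivial centers "charged" before the right-hand side saturates.

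The main obstacle is verifying this prefix inequality for the balanced sequence. Let $g(k)$ denote its prefix sums, with the $s$ larger stars placed first, so that
\begin{equation}
g(k)=\min\{k(q+1),\;E_N-(L-k)q\}.
\end{equation}
Both $f$ and $g$ are concave in $k$, with $f(0)=g(0)=0$. Moreover $f(N-1)=E_N$, while $g(N-1)\le E_N$.
\begin{itemize}
\item \emph{Second branch.} Since $q\le E_N/L\le N/2$ and $f$ is concave with $f(N-1)=E_N$, the increments of $f$ on the final stretch are at least $1$. Comparing the slope $q$ with the decreasing increments $N-k$ of $f$ shows $E_N-(L-k)q\le f(k)$ wherever this branch is the active minimum.
\item \emph{First branch.} The bound $k(q+1)\le f(k)$ for small $k$ is equivalent to $q+1\le (2N-k-1)/2$. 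It is only needed while $k\le s$.
\item \emph{Boundary cases.} I would handle these separately: $L=N-1$, where the profile is $q=N/2$ for even $N$, or $s=(N-1)/2$ stars of size $(N+1)/2$ for odd $N$; and $q+1>N/2$. The condition is then checked directly, using that $f(k)$ is an integer and hence at least $\lceil kN/2\rceil$ for $k\le N-1$.
\end{itemize}

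Once the inequality holds, Tarsi's theorem yields the balanced decomposition. Its maximum star size is $q+[s>0]=\lceil E_N/L\rceil$, which matches the lower bound and proves Eq.~\eqref{eq:complete-rho}.

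As a fallback, in case the Tarsi bookkeeping proves messy, I would use Corollary~\ref{thm:orientation} directly. Landau's theorem lets me choose a tournament score sequence whose out-degrees are as close to multiples of $r$ as possible, and I would then check that $\sum_v\lceil d^+(v)/r\rceil\le L$.
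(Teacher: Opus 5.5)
Your argument is correct, but it takes a different route from the paper.

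\textbf{The paper's route.} The paper splits the range of $L$ into two cases.
\begin{itemize}
\item At $L=N-1$ it builds the decomposition by hand. One user $z$ is never a center. The other $N-1$ users are oriented by a regular tournament, or by a regular tournament on $N$ vertices with one vertex deleted. Each star is $z$ together with the out-neighbours of its center, which gives the balanced sizes $\lceil N/2\rceil$ and $\lfloor N/2\rfloor$.
\item For $L\ge N$ it notes that $r_L\le\lceil (N-1)/2\rceil$. The minimum-degree condition of Theorem~\ref{thm:dense} therefore holds, and Tarsi's expansion-based sufficient condition supplies the balanced profile.
\end{itemize}
Your remark that Theorem~\ref{thm:dense} fails at $L=N-1$ for odd $N$ is exactly why the paper treats that case separately. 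For $L\ge N$ the theorem does apply.

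\textbf{Your route.} You invoke the exact characterization of star-size sequences of $K_N$, with the range $k\le\min\{L,N-1\}$ correctly stated. This result is due to Lin and Shyu, who confirmed a conjecture of Tarsi, so the reference is \cite{LinShyu1996} rather than \cite{Tarsi1981}. You then reduce everything to a prefix inequality.

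\textbf{What each route buys.} Yours gives one uniform argument for all $N-1\le L\le E_N$. Beyond the theorem, it also tells you exactly which unbalanced profiles are realizable. That is what you would need to optimize $H_{K_N}(L)$ when $k/\eta_{\rm s}(k)$ is not convex. The paper's route is explicit at the experimentally relevant point $L=N-1$. It also rests on a general-graph criterion that the paper reuses for $CP_N$, where the complete-graph characterization does not apply.

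\textbf{Tightening the verification.} Two points should be fixed.
\begin{itemize}
\item Concavity of $f$ and of $g$ separately does not give $f\ge g$; you need to argue branch by branch.
\begin{itemize}
\item For $s\le k\le N-1$ (when $s\le N-1$), the function $f(k)-kq-s$ is concave. At $k=N-1$ it equals $(L-N+1)q\ge0$, and at $k=s$ it equals $f(s)-s(q+1)$. So this branch reduces to the first-branch check at $k=s$.
\item The first-branch check, $k(q+1)\le f(k)$ for $k\le\min\{s,N-1\}$, is equivalent to $k\le 2N-2q-3$. If $s>0$ and $L\ge N$, then $q<E_N/L\le (N-1)/2$, so $2q\le N-2$ and the check holds for every $k\le N-1$.
\end{itemize}
\item The only case with $s>0$ and $q+1>N/2$ is odd $N$ with $L=N-1$, where the condition reads $k\le N-2$. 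There your proposed bound $f(k)\ge\lceil kN/2\rceil$ is too weak. For $N=5$, $L=4$, the prefix sum at $k=2$ is $6>5=\lceil 2\cdot 5/2\rceil$, even though $f(2)=7$. Use the exact value $f(k)=k(2N-k-1)/2$ instead.
\end{itemize}
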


\begin{proof}
We prove the lower and upper bounds in turn. The lower bounds follow from $\tau(K_N)=N-1$ and $|E(K_N)|=E_N$. At $L=N-1$, choose one user $z$ that is never a center. Orient the complete graph on the remaining $N-1$ users so that the out-degrees differ by at most one. If $N-1$ is odd, use a regular tournament, so every out-degree is $(N-2)/2$. If $N-1$ is even, delete one vertex from a regular tournament on $N$ vertices; the maximum out-degree is then $(N-1)/2$.

For each remaining user $v$, form one star whose leaves are $z$ and the out-neighbors of $v$. Every edge appears once, and the largest star has size
\begin{equation}
1+\max_v d_D^+(v)
=
\left\lceil\frac N2\right\rceil
=
\left\lceil\frac{E_N}{N-1}\right\rceil.
\label{eq:min-complete-profile}
\end{equation}
For $L\ge N$, put $r_L=\lceil E_N/L\rceil$. Then $r_L\le\lceil(N-1)/2\rceil$ and
\begin{equation}
\delta(K_N)=N-1\ge\frac N2+r_L-1.
\end{equation}
Theorem~\ref{thm:dense} therefore realizes the balanced profile. This proves Eq.~\eqref{eq:complete-rho}; inversion gives Eq.~\eqref{eq:complete-L}.
\end{proof}

\begin{corollary}[Balanced star partition of $K_N$]
\label{cor:balanced-complete}
For every $N-1\le L\le E_N$, write $E_N=qL+s$ with $0\le s<L$. Then $K_N$ has a nonredundant one-sided decomposition into $L-s$ stars of size $q$ and $s$ stars of size $q+1$.
\end{corollary}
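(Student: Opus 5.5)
The corollary will follow by reading off the profile produced in the proof of Theorem~\ref{thm:complete}, with the two ranges of $L$ treated separately as there.

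\emph{Case $L=N-1$.} Here $E_N=\tfrac N2(N-1)$, so $q=\lfloor N/2\rfloor$, with $s=0$ when $N$ is even and $s=(N-1)/2$ when $N$ is odd (then $q=(N-1)/2$). The explicit construction assigns user $v\neq z$ a star of size $1+d_D^+(v)$, where $D$ is the chosen orientation of $K_{N-1}$.
\begin{itemize}
\item If $N-1$ is odd, i.e.\ $N$ even, $D$ is a regular tournament. Every star then has size $1+(N-2)/2=N/2=q$, which gives the profile with $s=0$.
\item If $N$ is odd, $D$ is a regular tournament on $N$ vertices with one vertex $w$ deleted. All out-degrees in that tournament equal $(N-1)/2$. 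The $(N-1)/2$ in-neighbours of $w$ lose one out-edge, so they have out-degree $(N-3)/2$. The other $(N-1)/2$ vertices keep out-degree $(N-1)/2$. Adding the leaf $z$ yields $(N-1)/2$ stars of size $(N-1)/2=q$ and $(N-1)/2=s$ stars of size $q+1$, exactly the required profile.
\end{itemize}

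\emph{Case $L\ge N$.} The proof of Theorem~\ref{thm:complete} verified $\delta(K_N)=N-1\ge N/2+r_L-1$ using $r_L\le\lceil (N-1)/2\rceil$. Theorem~\ref{thm:dense} then asserts directly that $K_N$ admits the balanced profile of Eq.~\eqref{eq:balanced-profile-general}. That is, there is a star decomposition with $L-s$ stars of size $q$ and $s$ of size $q+1$.

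\emph{Main obstacle.} I expect the only delicate point to be checking that this profile consists of nonempty stars, so that it counts as $L$ genuine layers. This requires $q\ge1$, which follows from $L\le E_N$. When $L=E_N$ we get $q=1$ and $s=0$, the edge-by-edge decomposition, which is trivially valid.

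Stars from a star decomposition have disjoint edge sets, so the decomposition is nonredundant in both cases, completing the proof.
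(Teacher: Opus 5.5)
Your proposal is correct and follows the paper's own route: the corollary is read off from the proof of Theorem~\ref{thm:complete}, using the regular-tournament construction with one noncenter $z$ at $L=N-1$ and Theorem~\ref{thm:dense} for $L\ge N$. Your explicit in-neighbour count at $L=N-1$ is valid but not strictly needed, since star sizes that differ by at most one and sum to $E_N$ over $L$ stars already force exactly $L-s$ stars of size $q$ and $s$ stars of size $q+1$.
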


We record the full fan-out profile, not only the size of the largest star, because we will use it when layers of different sizes incur different optical loss.

We next show that the complete-graph law remains tractable when every user is missing exactly one partner. For even $N$, remove a perfect matching $M$ and let
\begin{equation}
CP_N=K_N-M
\end{equation}
be the cocktail-party graph, obtained by removing one perfect matching so that every user has one excluded partner. Its resources satisfy
\begin{equation}
|E(CP_N)|=\frac{N(N-2)}2,
\qquad
\tau(CP_N)=N-2.
\label{eq:cp-range}
\end{equation}

\begin{theorem}[Exact layer--fan-out tradeoff with one excluded partner per user]
\label{thm:cocktail}
For even $N\ge4$ and $N-2\le L\le N(N-2)/2$,
\begin{equation}
\rho_{CP_N}(L)
=
\left\lceil\frac{N(N-2)}{2L}\right\rceil.
\label{eq:cp-rho}
\end{equation}
Equivalently,
\begin{equation}
L_{\min}(CP_N,r)
=
\max\left\{N-2,\left\lceil\frac{N(N-2)}{2r}\right\rceil\right\}.
\label{eq:cp-L}
\end{equation}
\end{theorem}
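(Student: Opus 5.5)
The plan follows the proof of Theorem~\ref{thm:complete}: lower bounds from Eq.~\eqref{eq:basic-one-sided-bounds}, then an explicit construction at the smallest layer count, then Theorem~\ref{thm:dense} for the remaining range. The lower bound is immediate. By Eq.~\eqref{eq:cp-range}, $\tau(CP_N)=N-2$ and $|E(CP_N)|=N(N-2)/2$, so Eq.~\eqref{eq:basic-one-sided-bounds} gives $\rho_{CP_N}(L)\ge\lceil N(N-2)/(2L)\rceil$ and the lower bound in Eq.~\eqref{eq:cp-L}. Eq.~\eqref{eq:cp-L} then follows from Eq.~\eqref{eq:cp-rho} by inversion, exactly as in the complete case. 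So only the upper bound in Eq.~\eqref{eq:cp-rho} needs work, and I would split it into the cases $L=N-2$, $L=N-1$ and $L\ge N$.

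\emph{Case $L=N-2$.} Here the target is $\lceil N/2\rceil=N/2$. I would choose an excluded pair $\{z,z'\}\in M$ and declare that neither $z$ nor $z'$ is ever a center. This is consistent because $z$ and $z'$ are nonadjacent, so the remaining $N-2$ users form a vertex cover. The graph induced on those users is $CP_{N-2}$, and every vertex in it has even degree $N-4$. An Eulerian orientation of $CP_{N-2}$ therefore gives every vertex out-degree exactly $(N-4)/2$; for $N=4$ this graph is empty. For each remaining user $v$ I form one star with center $v$ whose leaves are $z$, $z'$ and the out-neighbours of $v$. Every edge then appears exactly once, and each star has size $2+(N-4)/2=N/2$.

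\emph{Case $L=N-1$.} The target is $\lceil N(N-2)/(2(N-1))\rceil$. Since $N(N-2)/(2(N-1))=(N-1)/2-1/(2(N-1))$ and $N$ is even, this ceiling is still $N/2$. I would split any star of the $(N-2)$-layer construction, which is possible because each has $N/2\ge2$ edges. This gives $N-1$ layers without raising the maximum star size.

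\emph{Case $L\ge N$.} Here $r_L\le\lceil (N-2)/2\rceil=N/2-1$. The minimum-degree condition of Theorem~\ref{thm:dense} then reads $\delta(CP_N)=N-2\ge N/2+r_L-1$, which holds, so the balanced profile is realized. This gives Eq.~\eqref{eq:cp-rho}.

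The main obstacle is the boundary between the explicit construction and Theorem~\ref{thm:dense}. The Tarsi-type condition fails at $L=N-1$, where $r_L=N/2$, so this case must be handled by the splitting argument rather than by the dense theorem. The remaining checks are routine: that the Eulerian orientation exists (all degrees even), and that the small case $N=4$, with $L\in\{2,3,4\}$, is covered.
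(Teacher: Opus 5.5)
Your proposal is correct and follows the paper's proof step for step. You take two noncenters from one deleted matching edge and Euler-orient the remaining $(N-4)$-regular graph for $L=N-2$, split one star for $L=N-1$, and invoke Theorem~\ref{thm:dense} for $L\ge N$; your explicit checks that $\lceil N(N-2)/(2(N-1))\rceil=N/2$ and that the minimum-degree condition fails exactly at $L=N-1$ supply details the paper leaves implicit.
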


We construct the optimum by using the two endpoints of a deleted matching edge as noncenters and balancing the orientation on the remaining users. We give the details in Appendix~\ref{app:cocktail-proof}. Figure~\ref{fig:N20-one-sided-law} illustrates the exact one-sided layer--fan-out law for $K_{20}$.

\begin{figure}[t]
\centering
\includegraphics[width=0.96\columnwidth]{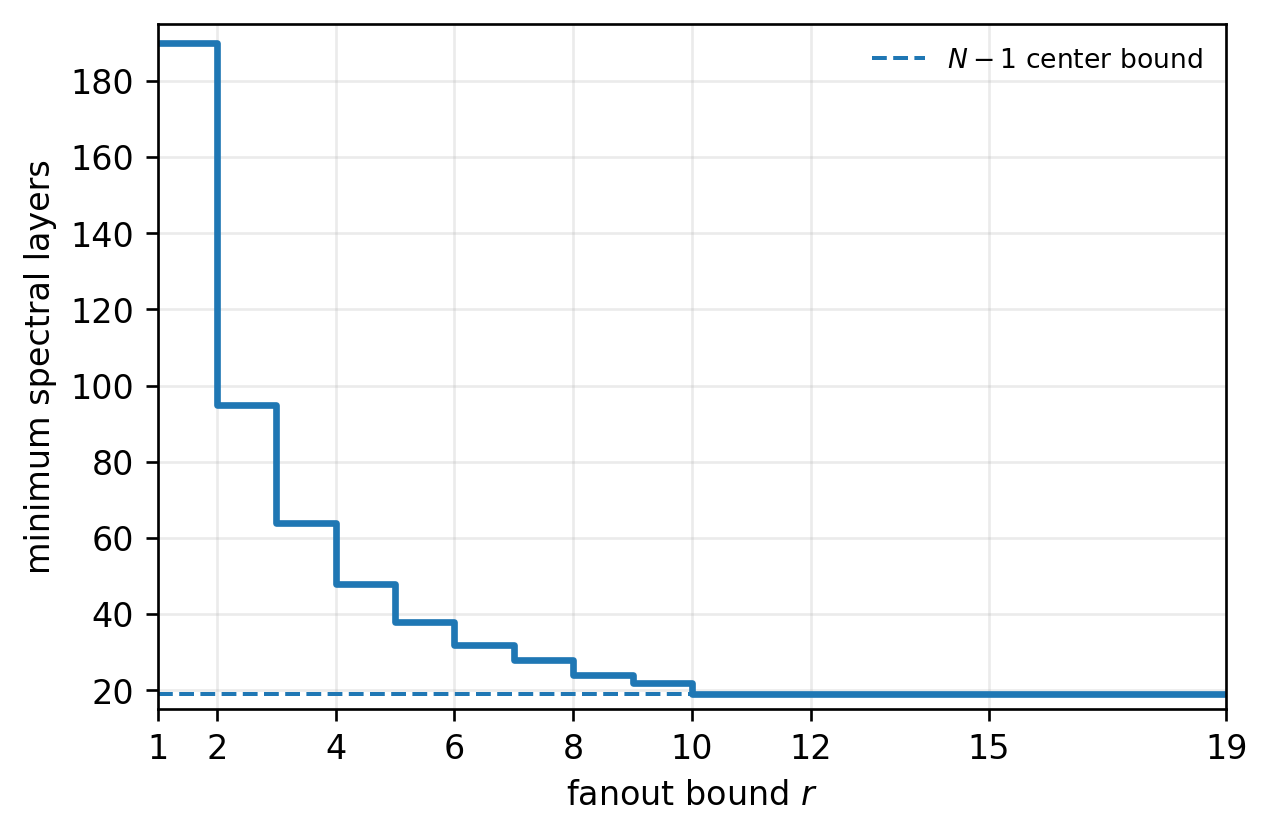}
\caption{Exact one-sided layer count for $K_{20}$ as a function of the fan-out bound. The edge-capacity term controls small fan-out, while the vertex-cover term $N-1$, which is linear in $N$, sets the plateau.}
\label{fig:N20-one-sided-law}
\end{figure}

Having settled the one-sided feasible set, we now allow the second conjugate wavelength channel to split. Stars are then replaced by general bicliques, and we must optimize the total number of layers and the receiver wavelength load separately.

\section{Two-sided fan-out}
\label{sec:two-sided}

We begin by distinguishing two resources. A biclique partition uses every requested edge exactly once, while a biclique cover may repeat edges. Let $\operatorname{bp}(G)$ and $\operatorname{bc}(G)$ denote the minimum numbers of bicliques in a partition and cover, respectively. The local quantities $\operatorname{lbp}(G)$ and $\operatorname{lbc}(G)$ were defined in Eqs.~\eqref{eq:lbp} and~\eqref{eq:lbc}.

We now compare the two complete-graph optima: the minimum layer count and the minimum receiver load. They are both known, but they scale differently.

\begin{theorem}[Biclique and local biclique parameters of $K_N$]
\label{thm:complete-biclique-laws}
For every $N\ge2$,
\begin{align}
\operatorname{bp}(K_N)&=N-1,
\label{eq:complete-bp}\\
\operatorname{lbc}(K_N)
=
\operatorname{lbp}(K_N)
&=
\left\lceil\log_2N\right\rceil.
\label{eq:complete-local-biclique}
\end{align}
\end{theorem}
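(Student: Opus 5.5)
The proof splits into the Graham--Pollak part, Eq.~\eqref{eq:complete-bp}, and the local part, Eq.~\eqref{eq:complete-local-biclique}. Each needs an upper construction and a lower bound.

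For $\operatorname{bp}(K_N)\le N-1$, take the $N-1$ stars $S(v_i,\{v_{i+1},\dots,v_N\})$. These partition $E(K_N)$ and are bicliques.

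For the lower bound I will use the standard Tverberg/Peck linear-algebra argument in the matrix language of Proposition~\ref{prop:optical-graph}. Write each layer as the $0/1$ indicator vectors $a_\ell,b_\ell$ of $A_\ell,B_\ell$. A partition then gives
\[
J_N-I_N=\sum_{\ell=1}^{L}\bigl(a_\ell b_\ell^{\mathsf T}+b_\ell a_\ell^{\mathsf T}\bigr).
\]
Suppose $L\le N-2$. Then the homogeneous system $b_\ell^{\mathsf T}x=0$ for all $\ell$, together with $\mathbf 1^{\mathsf T}x=0$, has at most $N-1$ equations in $N$ unknowns. It therefore has a nonzero real solution $x$. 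Evaluating the quadratic form gives
\[
x^{\mathsf T}(J_N-I_N)x=(\mathbf 1^{\mathsf T}x)^2-\|x\|^2=-\|x\|^2<0,
\]
whereas the right-hand side gives $2\sum_\ell (a_\ell^{\mathsf T}x)(b_\ell^{\mathsf T}x)=0$. This contradiction yields $L\ge N-1$. Alternatively, one could simply cite Graham--Pollak~\cite{GrahamPollak1971}.

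For the local parameters, note that every partition is a cover, so $\operatorname{lbc}\le\operatorname{lbp}$. It therefore suffices to prove $\operatorname{lbp}(K_N)\le\lceil\log_2N\rceil$ and $\operatorname{lbc}(K_N)\ge\lceil\log_2N\rceil$.

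\emph{Upper bound.} I will use a balanced binary hierarchy. Split $V$ into halves $V_0,V_1$ of sizes $\lceil N/2\rceil$ and $\lfloor N/2\rfloor$, use the layer $\Bic(V_0,V_1)$, and recurse independently inside each half. This is a partition, since each pair is separated at exactly one recursion node, namely the first split that separates them. A user lies in exactly one layer per recursion level along its root-to-leaf path. The depth $d(N)$ satisfies $d(1)=0$ and $d(N)=1+d(\lceil N/2\rceil)$, so $d(N)=\lceil\log_2N\rceil$ by a short induction. The layer count is also $N-1$, since each internal node of a binary tree with $N$ leaves contributes one layer, which matches Eq.~\eqref{eq:complete-bp}.

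\emph{Lower bound.} This is the main step. Given any cover with local incidence $k$, I must show $N\le 2^k$. I will try a random-assignment argument. For each layer $\ell$, independently pick a side $s_\ell\in\{A,B\}$ uniformly at random, and call user $u$ \emph{surviving} if for every layer containing $u$, $u$ lies on the picked side $s_\ell$.

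Two surviving users $u,v$ cannot be adjacent. Indeed, the edge $uv$ is covered by some layer with $u$ and $v$ on opposite sides, and only one of those sides was picked. Since $K_N$ is complete, at most one user survives, so the expected number of survivors is at most $1$.

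On the other hand, $\Pr[u\text{ survives}]=2^{-\chi_u}\ge 2^{-k}$. Summing over users, the expected number of survivors is at least $N2^{-k}$. Hence $N2^{-k}\le1$, so $k\ge\log_2 N$, and integrality gives $k\ge\lceil\log_2N\rceil$.

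This argument works for covers, so it handles $\operatorname{lbc}$ directly and closes the chain $\lceil\log_2N\rceil\le\operatorname{lbc}\le\operatorname{lbp}\le\lceil\log_2N\rceil$. The one point to check is that it uses only the disjointness condition in Eq.~\eqref{eq:disjoint-sides}. That condition is what guarantees that the covering layer places $u$ and $v$ on opposite sides.
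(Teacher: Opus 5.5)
Your proof is correct and is essentially the paper's argument. It uses the same Graham--Pollak linear-algebra contradiction: your quadratic form $x^{\mathsf T}(J_N-I_N)x$ is the paper's identity in Eq.~\eqref{eq:gp-edge-identity}, with the constraints imposed on $B_\ell$ instead of $A_\ell$. It also uses the same random-side counting of compatible users, giving $1\ge\sum_u 2^{-\chi_u}\ge N2^{-k}$ for covers, and the balanced binary hierarchy of Theorem~\ref{thm:binary-tree} for the matching upper bounds. The only difference, a cosmetic one, is that you certify $\operatorname{bp}(K_N)\le N-1$ with the nested star family rather than the hierarchy, which is equally valid.
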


The first identity is the Graham--Pollak theorem~\cite{GrahamPollak1971}; the local bound follows from the Katona--Szemer\'edi theorem and its biclique formulation~\cite{KatonaSzemeredi1967,Pinto2014}. We include the two short lower-bound arguments because they have direct optical meanings.

We first prove the layer-count lower bound. Suppose $K_N$ were partitioned into $L\le N-2$ bicliques $\Bic(A_\ell,B_\ell)$. Choose a nonzero vector $x\in\mathbb R^N$ satisfying
\begin{equation}
\sum_{i=1}^{N}x_i=0,
\qquad
\sum_{i\in A_\ell}x_i=0
\quad(1\le\ell\le L).
\label{eq:gp-linear-system}
\end{equation}
Such a vector exists because there are at most $N-1$ homogeneous equations. Since every pair lies across exactly one biclique,
\begin{equation}
\sum_{i<j}x_ix_j
=
\sum_{\ell=1}^{L}
\left(\sum_{i\in A_\ell}x_i\right)
\left(\sum_{j\in B_\ell}x_j\right)
=0.
\label{eq:gp-edge-identity}
\end{equation}
Together with $\sum_i x_i=0$, this gives $\sum_i x_i^2=0$, contradicting the choice of $x$. Hence $L\ge N-1$.

We next prove the receiver-load lower bound. Independently choose one side of every biclique in a cover, each with probability $1/2$. Call a user compatible when the chosen side agrees with that user in every biclique containing it. Two distinct users cannot both be compatible, because some covering biclique places them on opposite sides. Hence the expected number of compatible users is at most one. If every user occurs in at most $k$ bicliques, then
\begin{equation}
1\ge\sum_{u\in V}2^{-\chi_u}\ge N2^{-k},
\label{eq:local-biclique-probability}
\end{equation}
so $k\ge\lceil\log_2N\rceil$. The binary-tree construction below attains both lower bounds at once.

\subsection{Hierarchical biclique architecture}
\label{subsec:binary-tree}

We now construct a partition that attains both lower bounds. Let $T$ be a rooted full binary tree whose leaves are the users. Every internal node $t$ divides the leaves below it into the leaf sets $A_t$ and $B_t$ of its two child subtrees. Assign one spectral layer realizing $\Bic(A_t,B_t)$.

\begin{theorem}[Hierarchical biclique partition from a binary tree]
\label{thm:binary-tree}
The construction above satisfies
\begin{equation}
L=N-1,
\qquad
\Delta=0,
\qquad
\chi=\operatorname{height}(T).
\label{eq:tree-resources}
\end{equation}
A balanced tree has height $\lceil\log_2N\rceil$ and therefore attains both $\operatorname{bp}(K_N)$ and $\operatorname{lbp}(K_N)$.
\end{theorem}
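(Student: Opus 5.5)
We verify the three resource values directly from the tree structure, then specialize to a balanced tree. The key observation is that for two distinct users $u,v$ there is a unique internal node $t=\operatorname{lca}(u,v)$, their lowest common ancestor in $T$. At $t$ the paths to $u$ and $v$ enter different child subtrees, so one of $u,v$ lies in $A_t$ and the other in $B_t$. Hence $\{u,v\}\in\Bic(A_t,B_t)$.

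Conversely, suppose $\{u,v\}\in\Bic(A_s,B_s)$ for some internal node $s$. Then $u$ and $v$ both lie below $s$ but in different child subtrees, so $s$ is a common ancestor and no child of $s$ is one. This forces $s=\operatorname{lca}(u,v)$. Every pair of users therefore appears in exactly one layer, and no layer produces a non-edge, since $K_N$ contains all pairs. So the layers form a biclique partition of $K_N$ and $\Delta=0$. The sides $A_t$ and $B_t$ are disjoint and nonempty because $T$ is full, so Eq.~\eqref{eq:disjoint-sides} holds.

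For the layer count, we use that a full binary tree with $N$ leaves has exactly $N-1$ internal nodes. This follows by induction, or by counting edges: $2I=I+N-1$. Hence $L=N-1$.

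For the receiver load, user $u$ belongs to $A_t\cup B_t$ exactly when $t$ is a proper ancestor of the leaf $u$. Therefore $\chi_u$ equals the depth of $u$, and $\chi=\max_u\operatorname{depth}(u)=\operatorname{height}(T)$.

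For the balanced case, we build $T$ recursively by splitting $N$ leaves into subtrees of sizes $\lceil N/2\rceil$ and $\lfloor N/2\rfloor$. By induction the height is $h(N)=1+h(\lceil N/2\rceil)$ with $h(1)=0$. This gives $h(N)=\lceil\log_2N\rceil$, using $\lceil\log_2\lceil N/2\rceil\rceil=\lceil\log_2N\rceil-1$ for $N\ge2$. This ceiling identity is the only slightly delicate step. It holds because $2^{k-1}<N\le 2^k$ implies $2^{k-2}<\lceil N/2\rceil\le 2^{k-1}$.

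Finally, combining $L=N-1$ and $\chi=\lceil\log_2N\rceil$ with the lower bounds of Theorem~\ref{thm:complete-biclique-laws}, proved above via Eq.~\eqref{eq:gp-edge-identity} and Eq.~\eqref{eq:local-biclique-probability}, shows that the tree attains both $\operatorname{bp}(K_N)$ and $\operatorname{lbp}(K_N)$. The same tree also attains $\operatorname{lbc}(K_N)$, since $\operatorname{lbc}\le\operatorname{lbp}$. I expect no real obstacle here: the uniqueness of the lowest common ancestor carries the whole argument.
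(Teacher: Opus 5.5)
Your proof is correct and follows essentially the same route as the paper's: the $N-1$ internal nodes of a full binary tree give the layer count, the lowest common ancestor is the unique layer separating each pair (so $\Delta=0$), each user's incidence equals its leaf depth, and a recursively balanced split has height $\lceil\log_2N\rceil$, which matches the lower bounds of Theorem~\ref{thm:complete-biclique-laws}. You add a little more detail than the paper: the explicit converse showing that any separating node must be the lowest common ancestor, the edge-count identity for internal nodes, and the ceiling identity. The recursion $h(N)=1+h(\lceil N/2\rceil)$ also tacitly uses that $h$ is nondecreasing, which your inductive hypothesis supplies.
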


\begin{proof}
We verify the layer count, edge partition, and receiver load separately. A full binary tree with $N$ leaves has $N-1$ internal nodes, hence $N-1$ layers. Fix two users $u$ and $v$. Their lowest common ancestor $t$ is the unique internal node whose two child subtrees separate them, so $uv$ lies in $\Bic(A_t,B_t)$. At every proper descendant of $t$, at most one of the two users is present; at every ancestor, they lie in the same child subtree. The edge therefore occurs in exactly one layer.

A user belongs to the biclique associated with each internal node on its leaf-to-root path and to no other layer. Its incidence is its leaf depth, and the maximum incidence is the tree height. A recursively balanced split has height $\lceil\log_2N\rceil$, which meets the lower bound in Theorem~\ref{thm:complete-biclique-laws}.
\end{proof}

We can read the same construction directly as an optical architecture. At each internal node, one conjugate wavelength is distributed to the users in one child subtree and the other to the users in the second. Two users are connected at the first node where their paths separate. The hierarchy therefore keeps the minimum number of spectral layers while reducing the receiver load from linear to logarithmic. Its cost is larger two-sided fan-out near the root.

\subsection{Bounded side size}
\label{subsec:bounded-two-sided}

We now impose the hardware constraint that each wavelength may reach at most $r$ users, so every layer satisfies $|A|,|B|\le r$. A user meets at most $r$ new partners in one incident layer. Since each user of $K_N$ has $N-1$ partners,
\begin{equation}
\chi_u\ge\left\lceil\frac{N-1}{r}\right\rceil,
\qquad
\chi\ge\left\lceil\frac{N-1}{r}\right\rceil.
\label{eq:biclique-load-bound}
\end{equation}
One layer also contains at most $r^2$ edges.

\begin{proposition}[Lower bounds for bounded-side biclique covers]
\label{prop:biclique-cover-bound}
Any cover of $K_N$ by bicliques with $|A|,|B|\le r$ obeys
\begin{equation}
L\ge
\max\left\{
\left\lceil\frac{N(N-1)}{2r^2}\right\rceil,
\left\lceil
\frac{N\left\lceil(N-1)/r\right\rceil}{2r}
\right\rceil
\right\}.
\label{eq:biclique-cover-bound}
\end{equation}
\end{proposition}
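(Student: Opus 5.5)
Both terms follow from double counting, one over edges and one over user--layer incidences; the only subtlety is making sure the cover setting, which allows repeated edges, does not break either count.

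\emph{Edge count.} Every layer $\Bic(A_\ell,B_\ell)$ with $|A_\ell|,|B_\ell|\le r$ contains exactly $|A_\ell||B_\ell|\le r^2$ user pairs. A cover must realize each of the $E_N=N(N-1)/2$ edges of $K_N$ at least once, so
\begin{equation*}
\sum_{\ell=1}^{L}|A_\ell||B_\ell|
=\sum_{e}m_{\mathcal P}(e)\ge \frac{N(N-1)}{2}.
\end{equation*}
Hence $Lr^2\ge N(N-1)/2$. Since $L$ is an integer, this gives the first term of Eq.~\eqref{eq:biclique-cover-bound}. Repetitions only enlarge the left-hand side, so the bound holds for covers as well as partitions.

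\emph{Incidence count.} We double count the pairs $(u,\ell)$ with $u\in A_\ell\cup B_\ell$. Counting by users, this number is $\sum_{u}\chi_u(\mathcal P)$. By Eq.~\eqref{eq:biclique-load-bound}, each user must meet its $N-1$ partners, and each incident layer supplies at most $r$ of them, because the opposite side has size at most $r$. Thus $\chi_u\ge\lceil (N-1)/r\rceil$, and the total is at least $N\lceil (N-1)/r\rceil$.

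Counting by layers, the sides are disjoint by Eq.~\eqref{eq:disjoint-sides}, so layer $\ell$ contributes $|A_\ell|+|B_\ell|\le 2r$ incidences. The total is therefore at most $2rL$. Combining the two counts gives $2rL\ge N\lceil (N-1)/r\rceil$, and taking the ceiling yields the second term. The stated bound is the maximum of the two.

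I expect no real obstacle. The one point to check is the per-user bound $\chi_u\ge\lceil (N-1)/r\rceil$ under repetition. In a cover, a layer may re-supply partners that $u$ has already met, but that only increases the number of layers $u$ needs. The bound therefore relies only on the union of $u$'s partners over its incident layers being all $N-1$ other users, and each layer adding at most $r$ of them.
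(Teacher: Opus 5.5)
Your proposal is correct and follows the paper's own proof: the first term comes from counting edge positions (at most $r^2$ per layer against $N(N-1)/2$ edges), and the second from counting user--layer incidences (at least $N\lceil (N-1)/r\rceil$ from the per-user load bound, at most $2r$ per layer). Your added checks, that repetitions only strengthen both counts and that side disjointness makes $|\Bic(A_\ell,B_\ell)|=|A_\ell||B_\ell|$ and the per-layer incidence $|A_\ell|+|B_\ell|$, fill in details the paper leaves implicit.
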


\begin{proof}
We obtain the two bounds by counting edge positions and user--layer incidences. The first term counts edge positions: $K_N$ has $N(N-1)/2$ edges and one bounded biclique has at most $r^2$. For the second, summing Eq.~\eqref{eq:biclique-load-bound} over all users gives at least $N\lceil(N-1)/r\rceil$ user--layer incidences, while one layer contains at most $2r$ users.
\end{proof}

We next show that these two lower bounds are asymptotically sharp by a block construction. Divide the users into groups of size at most $r$, connect every pair of groups by one biclique, and use the binary-tree construction inside each group.

\begin{theorem}[Block-hierarchy construction for bounded two-sided fan-out]
\label{thm:block-hierarchy}
Let $1\le r\le N$ and $g=\lceil N/r\rceil$. Then there exists a nonredundant biclique partition of $K_N$ with side sizes at most $r$ such that
\begin{align}
L&=\binom g2+N-g,
\label{eq:block-hierarchy-L}\\
\chi&\le g-1+\left\lceil\log_2r\right\rceil.
\label{eq:block-hierarchy-chi}
\end{align}

For fixed $r$ and $N\to\infty$,
\begin{equation}
L=\frac{N^2}{2r^2}+O(N),
\qquad
\chi=\frac Nr+O(1).
\label{eq:block-hierarchy-asymptotic}
\end{equation}
\end{theorem}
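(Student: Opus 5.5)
The construction follows the recipe stated just before the theorem. Partition the $N$ users into $g=\lceil N/r\rceil$ groups $V_1,\dots,V_g$, each nonempty and of size at most $r$. For example, take $g-1$ groups of size $r$ and one remainder group, or balance the sizes; either choice works.

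The layers are of two kinds. For every unordered pair of groups $\{i,j\}$, use one inter-group layer $\Bic(V_i,V_j)$. Its sides are disjoint, nonempty and of size at most $r$, which gives $\binom g2$ layers. Inside each group $V_i$, apply the construction of Theorem~\ref{thm:binary-tree} with a balanced full binary tree on the $|V_i|$ users. This gives $|V_i|-1$ layers, each with sides contained in $V_i$ and hence of size at most $r$. Summing over groups gives $\sum_i(|V_i|-1)=N-g$ intra-group layers. This establishes Eq.~\eqref{eq:block-hierarchy-L}.

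For nonredundancy, fix a pair $\{u,v\}$. If $u\in V_i$ and $v\in V_j$ with $i\ne j$, the edge lies in $\Bic(V_i,V_j)$. It lies in no other inter-group layer, since each such layer joins a different pair of groups. It lies in no intra-group layer, since those use only users of a single group. If $u,v\in V_i$, no inter-group layer contains the edge, and Theorem~\ref{thm:binary-tree} (with $\Delta=0$) places it in exactly one layer of the tree on $V_i$. Every layer produces only edges of $K_N$, so the family is a partition.

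For the load, a user in $V_i$ lies in the $g-1$ inter-group layers involving $V_i$. It also lies in at most $\operatorname{height}(T_i)=\lceil\log_2|V_i|\rceil\le\lceil\log_2 r\rceil$ tree layers. Together these give Eq.~\eqref{eq:block-hierarchy-chi}.

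For the asymptotics, write $g=N/r+\theta$ with $0\le\theta<1$. Then $\binom g2=g^2/2+O(g)=N^2/(2r^2)+O(N)$, and $N-g=O(N)$. Hence $L=N^2/(2r^2)+O(N)$.

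For $\chi$, the upper bound is $g-1+\lceil\log_2 r\rceil=N/r+O(1)$ for fixed $r$. Eq.~\eqref{eq:biclique-load-bound} gives the matching lower bound $\chi\ge\lceil(N-1)/r\rceil=N/r-O(1)$. Together these give $\chi=N/r+O(1)$.

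No step is a real obstacle; the one place to be careful is the bookkeeping when the groups have unequal sizes. Singleton groups contribute no tree layers, and the count $\sum_i(|V_i|-1)$ must still equal $N-g$ exactly. It is also worth noting that the lower bounds in Proposition~\ref{prop:biclique-cover-bound} show both leading terms are sharp.
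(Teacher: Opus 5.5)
Your proposal is correct and follows the paper's proof essentially step for step: the same partition into $g$ blocks of size at most $r$, one $\Bic(V_i,V_j)$ layer per block pair plus a balanced binary-tree partition inside each block, the same layer count $\binom g2+N-g$ and load bound $g-1+\lceil\log_2 r\rceil$, and $g=N/r+O(1)$ for the asymptotics. The one addition is that you give an explicit matching lower bound on $\chi$ via Eq.~\eqref{eq:biclique-load-bound}; the paper leaves this implicit, and within the construction it also follows directly from $\chi\ge g-1$.
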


We compute the layer count and receiver-load bound from this construction in Appendix~\ref{app:block-proof}.

We recover the two limiting architectures directly. At $r=1$, the construction becomes the edge-by-edge partition. At $r=N$, all users lie in one block and the construction reduces to the binary-tree partition. Between these limits, $r$ controls the tradeoff between layer count, receiver load, and two-sided fan-out.

We now make the same tradeoff fully explicit for the complete eight-user network: every layer can be listed, and its repetitions and receiver incidences can be checked edge by edge.

\section{Constrained architecture tradeoff for the eight-user mesh}
\label{sec:k8-frontier}

We now compare five passive architectures for the complete graph $K_8$, which has $28$ requested links. The architectures below separate the number of spectral layers $L$, the edge-repetition overhead $\Delta$, the maximum receiver load $\chi$, and the largest side size used in any layer.

\begin{theorem}[Exact $K_8$ resource points under fan-out constraints]
\label{thm:k8-frontier}
\label{thm:k8-trichotomy}
For $K_8$, the following statements hold.
\begin{enumerate}
\item Every nonredundant passive assignment requires at least seven layers. A balanced-tree biclique partition attains $L=7$, $\Delta=0$, and the globally minimum receiver load $\chi=3$.
\item A seven-layer one-sided partition exists with maximum star size four. Every seven-layer star partition has $\chi=7$.
\item Under $|A|,|B|\le2$, a cover requires at least eight layers. Every eight-layer cover has $\Delta=4$ and $\chi=4$, and such a cover exists.
\item Under the same side-size bound, a nonredundant assignment requires at least nine layers. Nine layers are achievable with $\chi=4$.
\item Under $|A|,|B|\le1$, every feasible assignment is the edge-by-edge partition, up to layer ordering and interchange of the two conjugate wavelength sides, with $L=28$, $\Delta=0$, and $\chi=7$.
\end{enumerate}
\end{theorem}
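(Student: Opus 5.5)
The plan is to treat the five items separately. The general results do most of the work; only items 3 and 4 need a genuinely new counting argument specialized to $K_8$.

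\emph{Items 1, 2 and 5.} For item 1, the lower bound $L\ge7$ is Eq.~\eqref{eq:complete-bp} (Graham--Pollak). Theorem~\ref{thm:binary-tree}, applied to the balanced tree on $8$ leaves, gives $L=7$, $\Delta=0$ and $\chi=3$. This value is globally minimal because $\operatorname{lbc}(K_8)=3$ by Eq.~\eqref{eq:complete-local-biclique}, and $\operatorname{lbc}$ bounds every cover, redundant or not.

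For item 2, existence with maximum star size $4=\lceil 28/7\rceil$ is Theorem~\ref{thm:complete} at $L=N-1$. For the load claim, I would use that the centers of a star partition form a vertex cover (Lemma~\ref{lem:tau}). With $\tau(K_8)=7$ and only seven stars, the seven centers are distinct and cover all edges. Hence there is a unique noncenter $z$, every edge at $z$ lies in a star centered at its other endpoint, and $z$ is a leaf of all seven layers, so $\chi_z=7$. Since $\chi\le7$ holds trivially, $\chi=7$.

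For item 5, every layer is a single edge. A cover of $28$ edges therefore needs $28$ layers, and any cover with more layers than that would repeat an edge. So if the item is read as asking for the minimal, nonredundant assignment, it is forced to be the edge-by-edge partition with $\chi=7$; uniqueness is meant up to relabelling and swapping the two sides.

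\emph{Item 3 (the counting core).} With $|A|,|B|\le2$, Eq.~\eqref{eq:biclique-load-bound} gives $\chi_u\ge\lceil7/2\rceil=4$ for every user. Summing over users gives at least $32$ user--layer incidences, while each layer holds at most $4$ users, so $L\ge8$ (the second term of Proposition~\ref{prop:biclique-cover-bound}).

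For $L=8$ every inequality must be tight. Every layer is then of type $2\times2$, and every user lies in exactly four layers, so $\chi=4$. Each user then has exactly $4\cdot2=8$ partner slots for $7$ partners. Hence each user lies on exactly one surplus edge occurrence, and the surplus occurrences are counted twice over the users. Equivalently, the total number of edge occurrences is $8\cdot4=32=28+\Delta$, giving $\Delta=4$.

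For existence, I would group the users into pairs $\{0,1\},\{2,3\},\{4,5\},\{6,7\}$ and take the six inter-group $2\times2$ layers, which cover $24$ edges. I would then add $\{0,2\}\times\{1,3\}$ and $\{4,6\}\times\{5,7\}$, which cover the intra-group edges $01,23,45,67$ and repeat four edges. Each user is in three inter-group layers and one extra layer, so $\chi=4$.

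\emph{Item 4 (the main obstacle).} The bound $L\ge9$ follows from item 3: any eight-layer cover has $\Delta=4\ne0$. The hard part is exhibiting a nine-layer partition with $\chi=4$. Incidences total at least $32$ out of at most $36$ slots, so the construction must use mostly $2\times2$ layers, with a few $1\times2$ or $1\times1$ layers, and no user may exceed four layers.

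The first attempt, $K_{4,4}$ split into four $2\times2$ layers plus each $K_4$ split into a $C_4$ and two matching edges, gives ten layers, so it is too wasteful. I would instead modify the item-3 construction. I would drop or shrink some inter-group layers to $1\times2$ stars so that the intra-group edges can be absorbed without repetition. I would search for an explicit table of nine layers, by hand guided by the tightness conditions or by a small exhaustive check, and then verify it with the certificate $Q^{\mathsf T}\Rconj Q=A(K_8)$ of Proposition~\ref{prop:optical-graph}.
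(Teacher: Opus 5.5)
\textbf{Assessment: correct on items 1, 2, 3 and 5, but item 4 is incomplete.} Your arguments for items 1, 2, 3 and 5, and for the lower bound $L\ge9$ in item 4, follow the paper's proof:
\begin{itemize}
\item Graham--Pollak, the balanced tree and $\operatorname{lbc}(K_8)=3$ for item 1.
\item The vertex-cover argument for item 2, which forces a unique noncenter that is a leaf of all seven stars. You cite Theorem~\ref{thm:complete} for existence where the paper writes down the cyclic $\mathbb Z_7$ stars; both are valid.
\item The incidence count for item 3, tight at $8\cdot4=32$, which gives $\Delta=4$ and $\chi=4$. Your pairing-based eight-layer cover is the paper's cover up to relabelling.
\end{itemize}
Reading item 5 as a statement about nonredundant (equivalently, minimal) assignments is the right interpretation. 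The paper's one-line argument tacitly assumes the same.

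\textbf{The gap is the existence half of item 4.} The theorem asserts that a nonredundant nine-layer assignment with side size at most two and $\chi=4$ exists. You only describe a search for one, so that claim is unproved.

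The modification you propose also fails in its most direct form. Suppose you keep five of the six pairing-aligned $2\times2$ layers. The residual graph is then a $K_4$ on the two pairs whose block was dropped, plus the two intra-pair edges of the other two pairs, which are now isolated. The $K_4$ needs at least three layers, since $K_4$ minus a $C_4$ is a matching and not a $1\times2$ star. Each isolated edge needs its own layer. That brings you back to ten layers, so any nine-layer solution must contain a $2\times2$ layer whose sides are not blocks of your pairing.

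Your tightness reasoning can be sharpened into a guide for the construction:
\begin{itemize}
\item Nine layers with sizes in $\{1,2,4\}$ and total $28$ force $(n_4,n_2,n_1)=(5,4,0)$ or $(6,1,2)$.
\item The second profile has only $6\cdot4+3+2\cdot2=31$ user--layer incidences, fewer than the $32$ required by $\chi_u\ge4$.
\item Hence a solution consists of five $2\times2$ and four $1\times2$ layers. Every user lies in exactly four layers and is a leaf of exactly one $1\times2$ layer.
\end{itemize}
The paper closes item 4 with an explicit partition of exactly this shape:
$1|02$, $2|36$, $6|17$, $7|45$, $01|47$, $04|23$, $04|56$, $16|35$, $23|57$.
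Checking edge by edge shows that it covers each of the $28$ edges exactly once and that every user appears in four layers. Your proof needs such a table, verified directly or through $Q^{\mathsf T}\Rconj Q=A(K_8)$.
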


In Appendix~\ref{app:k8-proof}, we prove the lower bounds and verify the explicit decompositions for all five resource points. The five resource points are summarized in Table~\ref{tab:k8-topology}.

\begin{table*}[t]
\caption{Constrained resource points for the complete eight-user network. The layer type gives the largest two-sided fan-out occurring in each architecture.}
\label{tab:k8-topology}
\begin{ruledtabular}
\begin{tabular}{lccccl}
architecture & $L$ & largest layer type & $\Delta$ & $\chi$ & defining constraint \\
\hline
seven-star partition & $7$ & $1\times4$ & $0$ & $7$ & one-sided fan-out \\
seven-layer hierarchy & $7$ & $4\times4$ & $0$ & $3$ & minimum receiver load \\
eight-layer cover & $8$ & $2\times2$ & $4$ & $4$ & minimum $L$ with side size at most two \\
nine-layer partition & $9$ & $2\times2$ & $0$ & $4$ & nonredundant, side size at most two \\
pairwise unsplit & $28$ & $1\times1$ & $0$ & $7$ & no fan-out \\
\end{tabular}
\end{ruledtabular}
\end{table*}

We stress that these entries correspond to different hardware restrictions rather than a universal ranking. The hierarchy minimizes $L$ and $\chi$ but requires a $4\times4$ root layer. The star partition keeps one wavelength unsplit, the side-size-at-most-two designs limit both splitter sides, and the pairwise assignment removes fan-out altogether. We can rank the architectures physically only after specifying the splitter-loss and source models.

\section{Pair flux and pairwise key rates}
\label{sec:qkdmodel}

We now move from feasible routing to achievable key rates. The graph fixes which users can receive a conjugate pair, but it does not fix how much pair flux reaches that link or how much key the link produces. The rate calculation therefore needs two additional inputs: the branch probabilities that the splitter can realize and the calibrated response of each edge to the delivered flux. We state these assumptions before writing the source-rate region.

We first consider an independent-layer source model. It represents separate source elements, independently pumped spectral channels, or another implementation in which the pair-generation rate supplied to each occupied layer can be adjusted separately. It is distinct from the single broadband-source model introduced later.

\begin{definition}[Separable edge-response assumptions]
\label{def:separable-model}
For every occupied star $S$, let $\Sset_S\subset[0,1]^S$ be a nonempty compact set of physically realizable useful branch-probability vectors. We assume that
\begin{enumerate}
\item the source model supplies each occupied layer with an independently adjustable nonnegative pair-generation rate;
\item the branch vector of layer $S$ may be chosen only from $\Sset_S$;
\item the wavelength-layer label is retained whenever an edge occurs in more than one layer;
\item each edge has an asymptotic stationary response determined only by the pair flux delivered to that edge; and
\item there is no detector saturation, cross-edge accidental coupling, or receiver constraint not already included in that calibrated response.
\end{enumerate}
For every edge $e$, let $F_e:[0,\infty)\to[0,\infty)$ denote its exact nondecreasing continuous asymptotic response under the declared model. If only a certified achievable lower bound $\underline F_e$ is available, replacing $F_e$ by $\underline F_e$ gives an achievable inner region rather than an exact rate region.
\end{definition}

The continuous-wave model in Sec.~\ref{sec:cw} is generally a joint layer response because one detector channel can participate in several coincidence measurements within a fan-out layer. It reduces to the separable model only when accidental coincidences are negligible, when the relevant singles rates are fixed functions of the delivered edge flux, or when the coupling has already been included in a calibrated edge response.

We quantify the flux required for a target rate by defining the least supporting pre-protocol pair flux
\begin{equation}
\phi_e(R)=F_e^{\leftarrow}(R)
=\inf\{x\ge0:F_e(x)\ge R\},
\label{eq:inverse-response}
\end{equation}
with $\phi_e(R)=+\infty$ outside the response range.

For a star $S$, the target vector $\bm R=(R_e)_{e\in S}$ fixes a required flux $\phi_e(R_e)$ on every edge. If the layer operates at pair-generation rate $B_S$ with branch vector $\bm s$, edge $e$ receives flux $B_Ss_e$. The layer rate must therefore satisfy $B_S\ge\phi_e(R_e)/s_e$ for every edge. Minimizing the largest of these requirements over the realizable splitter set gives
\begin{equation}
\Psi_S(\bm R)
=
\min_{\bm s\in\Sset_S}
\max_{e\in S}
\frac{\phi_e(R_e)}{s_e},
\label{eq:general-star-cost}
\end{equation}
where a positive numerator divided by zero is $+\infty$ and $0/0$ is taken as zero. Thus, $\Psi_S(\bm R)$ is the least input pair-generation rate with which the physical splitter can satisfy all edge targets in that star.

\begin{theorem}[Exact rate region for a fixed star decomposition]
\label{thm:protocol-region}
Let $\Dset$ be a nonredundant star decomposition. Under Definition~\ref{def:separable-model}, its simultaneous rate region under total pair-rate budget $B_{\rm tot}$ is
\begin{equation}
\Rset_{\Dset}(B_{\rm tot})
=
\left\{\bm R\ge0:
\sum_{S\in\Dset}\Psi_S(\bm R)
\le B_{\rm tot}
\right\}.
\label{eq:general-region}
\end{equation}
\end{theorem}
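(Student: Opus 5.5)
The proof establishes the set equality in Eq.~\eqref{eq:general-region} through two inclusions: achievability and necessity. Throughout, $\Dset$ is fixed and nonredundant, so every edge $e$ lies in exactly one star $S(e)\in\Dset$. This gives the basic structural fact: the flux delivered to $e$ is $B_{S(e)}s_e$. By assumptions (4) and (5) of Definition~\ref{def:separable-model}, the rate $R_e$ depends only on this quantity, so the constraints for different stars decouple except through the shared budget $\sum_S B_S\le B_{\rm tot}$.

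First I record a lemma on the inverse response. Since $F_e$ is nondecreasing and continuous, the infimum in Eq.~\eqref{eq:inverse-response} is attained whenever it is finite. Hence $F_e(x)\ge R$ if and only if $x\ge\phi_e(R)$. The "only if" direction follows from the definition of the infimum. The "if" direction uses two facts: continuity gives $F_e(\phi_e(R))\ge R$ (take a decreasing sequence $x_k\downarrow\phi_e(R)$ with $F_e(x_k)\ge R$), and monotonicity then extends this to every $x\ge\phi_e(R)$. With this lemma, a layer with rate $B_S$ and branch vector $\bm s$ supports the targets $(R_e)_{e\in S}$ exactly when $B_Ss_e\ge\phi_e(R_e)$ for all $e\in S$. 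The zero-division conventions handle the cases $s_e=0$ and $\phi_e(R_e)=0$, and this condition is equivalent to $B_S\ge\max_e\phi_e(R_e)/s_e$.

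Next I show that the minimum in Eq.~\eqref{eq:general-star-cost} is attained. This is the step I expect to need the most care. The map $\bm s\mapsto\max_e\phi_e(R_e)/s_e$, valued in $[0,\infty]$, is lower semicontinuous on the compact set $\Sset_S$. Each term $\phi_e/s_e$ is continuous where $s_e>0$, and it jumps to $+\infty$ or to $0$ at $s_e=0$ in a way consistent with lower semicontinuity, because $0/0=0$ is the smaller value. A maximum of finitely many lower semicontinuous functions is lower semicontinuous, and a lower semicontinuous function on a compact set attains its infimum. Consequently, the set of rates $B_S$ for which some realizable splitter setting of star $S$ meets its targets is exactly $[\Psi_S(\bm R),\infty)$, or is empty when $\Psi_S=\infty$.

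For necessity, take any operating point achieving $\bm R$, with layer rates $B_S$ summing to at most $B_{\rm tot}$ and branch vectors in $\Sset_S$. The previous step gives $B_S\ge\Psi_S(\bm R)$ for each star, and summing over $\Dset$ gives the inequality in Eq.~\eqref{eq:general-region}. For sufficiency, suppose $\sum_S\Psi_S(\bm R)\le B_{\rm tot}$. Then every $\Psi_S$ is finite. I choose a minimizing $\bm s^\star_S$ and set $B_S=\Psi_S(\bm R)$, which meets the budget and delivers the required flux to each edge. Assumption (3) guarantees that the layer label is retained, so each edge's key is attributed to its unique star, and $R_e$ is therefore achieved. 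Any leftover budget can be left unused; if total use must equal $B_{\rm tot}$ exactly, the surplus can be added to any layer, since $F_e$ is monotone and extra flux does not lower any rate.
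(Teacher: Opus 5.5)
Your proposal is correct and follows the paper's proof essentially step for step. Necessity comes from the per-star bound $B_S\ge\Psi_S(\bm R)$ summed over $\Dset$. Sufficiency comes from taking a minimizing branch vector in each star and setting $B_S=\Psi_S(\bm R)$: the minimizer exists by compactness of $\Sset_S$ and lower semicontinuity, and $F_e(\phi_e(R_e))\ge R_e$ follows from continuity and monotonicity.

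You spell out the inverse-response equivalence and the zero-division bookkeeping in more detail than the paper does. One small wording point: for $\phi_e(R_e)>0$ the term $\phi_e(R_e)/s_e$ diverges continuously to $+\infty$ as $s_e\to0$ rather than jumping, which still gives lower semicontinuity.
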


\begin{proof}
We prove necessity and sufficiency separately. For a chosen branch vector $\bm s\in\Sset_S$, every target rate in star $S$ requires
\begin{equation}
B_Ss_e\ge\phi_e(R_e)
\qquad(e\in S).
\end{equation}
Hence $B_S\ge\max_e\phi_e(R_e)/s_e$, and minimizing over the realizable branch vectors gives the necessary bound $B_S\ge\Psi_S(\bm R)$. Summing over the occupied stars proves necessity. Compactness of $\Sset_S$ and lower semicontinuity of the objective ensure that the minimum in Eq.~\eqref{eq:general-star-cost} is attained. For every finite demand within the response range, continuity and monotonicity give $F_e(\phi_e(R_e))\ge R_e$. Choosing one minimizing branch vector in each star and setting $B_S=\Psi_S(\bm R)$ therefore proves sufficiency.
\end{proof}

We now recover two common splitter models as special cases. If the branch ratios are continuously tunable over the useful simplex
\begin{equation}
\Sset_S
=
\left\{\bm s\ge0:
\sum_{e\in S}s_e\le\eta_S^{\rm split}
\right\},
\label{eq:tunable-simplex}
\end{equation}
then proportional allocation gives
\begin{equation}
\Psi_S(\bm R)
=
\frac{1}{\eta_S^{\rm split}}
\sum_{e\in S}\phi_e(R_e).
\label{eq:tunable-star-cost}
\end{equation}
For a fixed branch vector $\bm s^{(0)}$, one instead has
\begin{equation}
\Psi_S(\bm R)
=
\max_{e\in S}
\frac{\phi_e(R_e)}{s_e^{(0)}}.
\label{eq:fixed-star-cost}
\end{equation}
The sum formula is therefore exact only for the tunable-simplex model; a fixed splitter is governed by the maximum-ratio expression.

We next identify when the decomposition itself becomes irrelevant. When every star has the same useful transmission $\eta^{\rm split}$ and the tunable-simplex model applies, every requested edge contributes once to the total source budget. Equation~\eqref{eq:general-region} then reduces to
\begin{equation}
\Rset(B_{\rm tot})
=
\left\{\bm R\ge0:
\sum_{e\in E(G)}\phi_e(R_e)
\le B_{\rm tot}\eta^{\rm split}
\right\}.
\label{eq:invariant-region}
\end{equation}
Thus, fan-out-independent loss makes the edge-separable rate region independent of the particular star decomposition. The decomposition becomes physically relevant when the useful transmission depends on fan-out, when the branch ratios are fixed, or when the source fixes the relative layer rates.

\subsection{Label-resolved Bell-pair model}
\label{subsec:flagged-model}

We now specialize to the simplest labelled response model. The model retains the successful edge label and processes each labelled stream only for the two users at the endpoints of that edge. We call this \emph{direct edgewise processing}. It excludes key relaying, joint conversion of entanglement across different edge labels, and multipartite distillation.

For one emission in layer $\ell$, let $p_{\ell e}$ be the probability that edge $e$ is selected and accepted. Conditioned on this event, the endpoints of $e$ receive an ideal Bell pair. The output state is
\begin{equation}
\omega_\ell
=
\sum_{e\in S_\ell}
 p_{\ell e}|e\rangle\!\langle e|_F\otimes\Phi_e
+\omega_\ell^{\rm er},
\label{eq:flaggedstate}
\end{equation}
where $F$ stores the edge label and $\omega_\ell^{\rm er}$ contains the unsuccessful events and is assumed to contain no distillable key between any user pair.

\begin{theorem}[Direct edgewise key region for labelled Bell-pair erasures]
\label{thm:flagged}
Under direct edgewise processing, each labelled block is used only to generate a direct key between its own endpoints, without key relaying or joint network distillation. The asymptotic secret-key rates per source emission are
\begin{equation}
\Kset_\ell
=
\{\bm K_\ell:0\le K_{\ell e}\le p_{\ell e}
\text{ for every }e\in S_\ell\}.
\label{eq:flagged-region}
\end{equation}
Every rate vector in this box is attainable.
\end{theorem}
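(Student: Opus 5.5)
The proof has two halves: achievability of every point in the box, and a converse showing no larger rate is possible. For achievability, the edge label $F$ is classical and available to all users, for instance announced publicly or determined by which wavelength detector fired. Conditioned on the label $e$, the endpoints of $e$ hold the ideal Bell pair $\Phi_e$, which is a private state. It yields one secret bit per copy through a direct BBM92 or Ekert-type measurement with no error correction and no privacy amplification, since $\Phi_e$ is pure and every other party, including the holder of the purification of $\omega_\ell$, is uncorrelated with it. Over $n$ emissions of layer $\ell$, the number of blocks labelled $e$ concentrates at $np_{\ell e}$ by the law of large numbers. This gives rate $p_{\ell e}$ for each edge simultaneously, because the labelled sub-ensembles are disjoint and each is processed only by its own endpoints. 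To reach an arbitrary $K_{\ell e}\le p_{\ell e}$, each edge pair simply discards a fraction $1-K_{\ell e}/p_{\ell e}$ of its key bits; discarding is free and keeps secrecy. Hence every vector in the box is attainable.

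For the converse, fix an edge $e$. Under direct edgewise processing the only data the endpoints of $e$ may use for their key are the blocks carrying the label $e$ together with the erasure part. Here I use the standing assumption that $\omega_\ell^{\rm er}$ contains no distillable key between any user pair, so it contributes nothing to the key between the endpoints of $e$. The blocks labelled by other edges $e'\neq e$ may not contribute either: by definition of direct edgewise processing they are used only for the key of $e'$, and I would also argue that they carry no key across the cut for $e$ anyway. I would then bound the key with a monotone such as the relative entropy of entanglement or squashed entanglement of the flagged state, and use its additivity over classical flags: the flagged state is $\bigoplus$ over labels, so the monotone equals $\sum_{e'}p_{\ell e'}E(\text{block }e')$. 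Restricted to the bipartition relevant for $e$, only $\Phi_e$ contributes, with $E(\Phi_e)=1$. Combined with the asymptotic key bound $K\le E_R^\infty$ on $n$ copies, this gives $K_{\ell e}\le p_{\ell e}$.

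The delicate step is this converse, specifically justifying that the erasure component and the other labelled blocks contribute nothing, not even through correlations with the label register. The assumption on $\omega_\ell^{\rm er}$ only states ``no distillable key'', whereas a clean proof needs an upper bound by a monotone that vanishes on the erasure part. I would first try to state the bound per branch of the classical flag, using the convexity and flag-additivity of the relative entropy of entanglement. I would then invoke the direct-edgewise restriction to exclude any joint use of different branches, so the bound becomes one ebit per $e$-labelled emission, with rates per source emission obtained by normalizing by $n$.
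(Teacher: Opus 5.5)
Your proposal is correct and follows essentially the same route as the paper. Achievability comes from processing the labelled blocks separately, since they contain asymptotically $np_{\ell e}$ Bell pairs. In the converse, the direct-edgewise rule confines the key of $e$ to the block labelled $e$, and one Bell pair gives at most one secret bit. The monotone and flag-additivity detour is unnecessary, and your concern about $\omega_\ell^{\rm er}$ does not arise: the paper reads the processing rule as excluding the erasure events and all other labels outright, which leaves only the elementary bound of $k$ secret bits from $k$ Bell pairs (e.g.\ $E_R(\Phi_e^{\otimes k})=k$).
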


\begin{proof}
We first establish achievability and then the converse. After $n$ emissions, the block labelled by edge $e$ contains asymptotically $np_{\ell e}$ Bell pairs. Processing the labelled blocks separately attains every rate vector in Eq.~\eqref{eq:flagged-region}. Conversely, direct edgewise processing assigns the key of edge $e$ only to the block carrying label $e$. One Bell pair supplies at most one secret bit under this processing rule, so that block cannot yield more than $np_{\ell e}$ secret bits asymptotically. Dividing by $n$ gives $K_{\ell e}\le p_{\ell e}$.
\end{proof}

The theorem fixes the normalization used below: one accepted, publicly labelled Bell pair supplies one unit of direct pairwise key. It is not a statement about the unrestricted network key capacity, for which relaying or joint processing across different edge labels may change the achievable region~\cite{Takeoka2017}.

\subsection{Linear responses and service constraints}

We next consider the linear-response regime. If the edge error parameters do not change with allocated flux, the response is
\begin{equation}
F_e(x)=\gamma_ex,
\qquad \gamma_e>0.
\label{eq:linear-response}
\end{equation}
The coefficient $\gamma_e$ is the number of secret bits obtained per unit pre-protocol pair flux. For asymptotic BBM92~\cite{BBM1992} with flux-independent bit and phase errors,
\begin{equation}
\gamma_e=q_eg_e
\left[
1-f_{{\rm EC},e}H_2(E_e^b)-H_2(E_e^p)
\right]_+.
\label{eq:gammaBBM}
\end{equation}
Here $q_e$ is the basis-sifting probability, $g_e$ is the accepted-pair probability after path loss and timing selection, and $f_{{\rm EC},e}$ is the error-correction inefficiency. We use $H_2(x)=-x\log_2x-(1-x)\log_2(1-x)$ for the binary entropy and $[x]_+=\max\{x,0\}$. Passive-basis and threshold-detector implementations require the corresponding receiver security proof~\cite{Tsurumaru2008,Kawakami2026}. For a linear response, $\phi_e(R)=R/\gamma_e$.

Under the common tunable-simplex model, Eq.~\eqref{eq:invariant-region} becomes
\begin{equation}
\Rset_{\rm lin}(B_{\rm tot})
=
\left\{\bm R\ge0:
\sum_{e\in E(G)}\frac{R_e}{\gamma_e}
\le B_{\rm tot}\eta^{\rm split}
\right\}.
\label{eq:linear-simplex}
\end{equation}
The largest common edge rate is therefore
\begin{equation}
R_{\rm mm}^{\star}
=
\frac{B_{\rm tot}\eta^{\rm split}}
{\sum_e\gamma_e^{-1}}.
\label{eq:linear-maxmin}
\end{equation}

We now impose a common pair-generation rate $B$ on every occupied layer. Under this constraint, a star $S$ supports common edge rate $R$ only if
\begin{equation}
R\sum_{e\in S}\gamma_e^{-1}
\le B\eta^{\rm split}.
\end{equation}
Define its weighted load by
\begin{equation}
W(S)=\sum_{e\in S}\gamma_e^{-1}
\label{eq:weighted-star-load}
\end{equation}
and, for exactly $L$ layers,
\begin{equation}
\Omega_G(L;\bm\gamma)
=
\min_{\substack{\Dset\text{ a star decomposition}\\|\Dset|=L}}
\max_{S\in\Dset}W(S).
\label{eq:Omega}
\end{equation}
The most demanding star limits the common rate:
\begin{equation}
R_{\rm mm}^{\star}(G,L|B)
=
\frac{B\eta^{\rm split}}
{\Omega_G(L;\bm\gamma)}.
\label{eq:equal-layer-rate}
\end{equation}
When all edges have the same gain $\gamma_e=\gamma$, this reduces to
\begin{equation}
R_{\rm mm}^{\star}(G,L|B)
=
\frac{B\gamma\eta^{\rm split}}{\rho_G(L)}.
\label{eq:hom-rate}
\end{equation}

We can also use the same source budget to enforce proportional service. If $d_e>0$ and $R_e\ge d_et$ is required, the largest common service level is
\begin{equation}
t^\star
=
\frac{B_{\rm tot}\eta^{\rm split}}
{\sum_e d_e/\gamma_e},
\qquad
R_e^\star=d_et^\star.
\label{eq:service}
\end{equation}
The weights $d_e$ fix the required proportions, while $1/\gamma_e$ is the pair flux needed for one unit of rate on edge $e$.

We now relax the assumption responsible for decomposition invariance. The invariance above rests on fan-out-independent useful transmission and tunable branch ratios. Once larger stars suffer greater loss, the full star-size profile enters the source budget.

\section{Fan-out-dependent splitter loss}
\label{sec:hardware}

We begin by identifying why fan-out changes the physical ordering. With fan-out-independent transmission, every edge contributes the same total pair-generation requirement regardless of the star that contains it. Real splitter trees lose additional flux as the number of leaves increases. We therefore retain the star decomposition and let the useful splitter transmission depend on the fan-out.

For a splitter serving $k$ leaves, let
\begin{equation}
\eta_{\rm s}(k)\in(0,1]
\label{eq:etak}
\end{equation}
be the measured probability that an input photon exits through one of the useful leaf branches, including insertion loss. If $s_e$ is the branch probability assigned to edge $e$ of a $k$-edge star, then
\begin{equation}
\sum_{e\in S}s_e\le\eta_{\rm s}(k).
\label{eq:fan-out-budget}
\end{equation}
A wavelength-dependent splitter is handled by replacing $\eta_{\rm s}(k)$ with $\eta_{{\rm s},\ell}(k)$.

\subsection{Source budget for a fixed star decomposition}

We first quantify the source requirement of one fixed star decomposition. Suppose edge $e$ must produce rate $R_e$, and hence requires pre-protocol flux $\phi_e(R_e)$. All edge fluxes in one star pass through the same splitter. Their sum must therefore be divided by the useful transmission of that star.

\begin{theorem}[Rate region for tunable fan-out-dependent splitters]
\label{thm:hardware-region}
Assume that every $k$-leaf splitter is tunable over the simplex in Eq.~\eqref{eq:fan-out-budget}. Fix a star decomposition $\Dset$ and write $k_S=|E(S)|$. The rate region for separable edge responses is
\begin{equation}
\begin{aligned}
\Rset_{\Dset}^{\rm hw}(B_{\rm tot})
=\{\bm R\ge0:\;&
\sum_{S\in\Dset}
\frac{1}{\eta_{\rm s}(k_S)}\\[-1mm]
&\times\sum_{e\in S}\phi_e(R_e)
\le B_{\rm tot}\}.
\end{aligned}
\label{eq:hardware-region}
\end{equation}
For exactly $L$ layers, the architecture region is the union of these sets over all $L$-star decompositions of $G$.
\end{theorem}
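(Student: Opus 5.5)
The proof reduces this statement to Theorem~\ref{thm:protocol-region}: we evaluate the star cost $\Psi_S$ in Eq.~\eqref{eq:general-star-cost} for the fan-out-dependent simplex
\[
\Sset_S=\Bigl\{\bm s\ge0:\textstyle\sum_{e\in S}s_e\le\eta_{\rm s}(k_S)\Bigr\}.
\]
This set is nonempty and compact, so Definition~\ref{def:separable-model} applies. The exact region is therefore $\{\bm R\ge0:\sum_{S}\Psi_S(\bm R)\le B_{\rm tot}\}$, and the whole proof comes down to showing
\[
\Psi_S(\bm R)=\frac{1}{\eta_{\rm s}(k_S)}\sum_{e\in S}\phi_e(R_e).
\]
This is the same computation as Eq.~\eqref{eq:tunable-star-cost}, with $\eta_S^{\rm split}$ replaced by $\eta_{\rm s}(k_S)$. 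We spell it out because that equation was stated without proof.

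\emph{Lower bound.} Fix any $\bm s\in\Sset_S$ and write $M=\max_e\phi_e(R_e)/s_e$. For each edge, $\phi_e(R_e)\le Ms_e$; this also holds when $s_e=0$, by the convention $0/0=0$ or, if $\phi_e(R_e)>0$, because then $M=+\infty$. Summing over $e\in S$ gives $\sum_e\phi_e(R_e)\le M\sum_e s_e\le M\eta_{\rm s}(k_S)$. Hence $M\ge\sum_e\phi_e(R_e)/\eta_{\rm s}(k_S)$ for every admissible $\bm s$.

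\emph{Attainment.} Let $\Phi=\sum_e\phi_e(R_e)$. If $\Phi$ is finite and positive, the proportional choice $s_e=\eta_{\rm s}(k_S)\,\phi_e(R_e)/\Phi$ lies in $\Sset_S$. Every finite ratio $\phi_e/s_e$ then equals $\Phi/\eta_{\rm s}(k_S)$, and edges with $\phi_e=0$ contribute $0$. If $\Phi=0$, any $\bm s$, for instance $\bm s=0$, gives cost $0$. If some $\phi_e=+\infty$, both sides are $+\infty$, so $\bm R$ lies outside the region on both descriptions.

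Substituting this $\Psi_S$ into Eq.~\eqref{eq:general-region} yields Eq.~\eqref{eq:hardware-region}, and necessity and sufficiency come directly from Theorem~\ref{thm:protocol-region}. For the statement about exactly $L$ layers, a rate vector is achievable with some $L$-layer architecture precisely when it is achievable for some $L$-star decomposition $\Dset$. The architecture region is therefore the union of the $\Rset^{\rm hw}_{\Dset}(B_{\rm tot})$ over all such $\Dset$, and Lemma~\ref{lem:tau} ensures this union is nonempty for $\tau(G)\le L\le m$.

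The only real obstacle is careful bookkeeping at the degenerate cases: zero fluxes, zero branch probabilities, and infinite $\phi_e$ outside the response range. These are handled by the division conventions fixed after Eq.~\eqref{eq:general-star-cost}.
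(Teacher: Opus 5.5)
Your proposal is correct and follows essentially the paper's argument. Necessity comes from summing the per-edge requirements $\phi_e(R_e)\le B_S s_e$ against the splitter budget $\sum_{e\in S}s_e\le\eta_{\rm s}(k_S)$, and sufficiency from the proportional allocation $s_e=\eta_{\rm s}(k_S)\,\phi_e(R_e)/\sum_{f\in S}\phi_f(R_f)$; the only difference is that you route this through $\Psi_S$ and Theorem~\ref{thm:protocol-region}, whereas the paper argues directly with $B_S$ and $x_e=B_Ss_e$.
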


We prove necessity by summing the edge flux required behind each splitter and prove sufficiency by proportional branch allocation. The complete proof is given in Appendix~\ref{app:hardware-proof}.

We now specialize this region to linear responses and tunable branch ratios. One unit of common edge rate requires flux $1/\gamma_e$. The source pair-generation rate required by star $S$ is therefore
\begin{equation}
h(S)=
\frac{1}{\eta_{\rm s}(|E(S)|)}
\sum_{e\in S}\frac{1}{\gamma_e}.
\label{eq:star-hardware-cost}
\end{equation}
The sum is the useful output flux; division by $\eta_{\rm s}$ converts it to input pair-generation rate.

For exactly $L$ layers, define the minimum pair-generation rate per unit common key rate by
\begin{equation}
H_G(L)=
\min_{\substack{\Dset\text{ star decomposition of }G\\|\Dset|=L}}
\sum_{S\in\Dset}h(S).
\label{eq:HG}
\end{equation}
A common edge rate $R$ requires total pair-generation rate $RH_G(L)$.

\begin{corollary}[Hardware-aware common rate]
\label{cor:hardware-maxmin}
For linear edge responses and tunable branch ratios,
\begin{equation}
R_{\rm mm}^{\star}(G,L)=
\frac{B_{\rm tot}}{H_G(L)}.
\label{eq:hardware-maxmin}
\end{equation}
If $\eta_{\rm s}(k)=\eta^{\rm split}$ is independent of $k$, then
\begin{equation}
H_G(L)=
\frac{1}{\eta^{\rm split}}
\sum_{e\in E(G)}\gamma_e^{-1},
\label{eq:constant-H}
\end{equation}
so the common rate is independent of the star decomposition.
\end{corollary}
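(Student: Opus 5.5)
The corollary follows from Theorem~\ref{thm:hardware-region} by specializing to linear responses, where $\phi_e(R)=R/\gamma_e$. I take $R_{\rm mm}^{\star}(G,L)$ to be the largest common rate $R$ for which the constant vector $R_e=R$ lies in the architecture region, that is, in the union over all $L$-star decompositions $\Dset$ of $\Rset_{\Dset}^{\rm hw}(B_{\rm tot})$. The plan has three steps: compute the membership condition for one fixed decomposition, optimize over decompositions, and then treat the constant-transmission case.

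For the first step, fix $\Dset$ and substitute $\phi_e(R)=R/\gamma_e$ into Eq.~\eqref{eq:hardware-region}. The constant vector lies in the region exactly when
\begin{equation*}
R\sum_{S\in\Dset}\frac{1}{\eta_{\rm s}(k_S)}\sum_{e\in S}\gamma_e^{-1}
=R\sum_{S\in\Dset}h(S)\le B_{\rm tot}.
\end{equation*}
This uses the definition of $h(S)$ in Eq.~\eqref{eq:star-hardware-cost}. The left-hand side is linear in $R$, so the largest admissible rate for this decomposition is $B_{\rm tot}/\sum_{S\in\Dset}h(S)$.

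For the second step, the architecture region is a union, so the best common rate is the maximum of these values over all $L$-star decompositions. Equivalently, it is $B_{\rm tot}$ divided by the minimum of $\sum_S h(S)$, which is $H_G(L)$ by Eq.~\eqref{eq:HG}. I will make two remarks to justify this step. First, the minimum in Eq.~\eqref{eq:HG} is attained, because $G$ has only finitely many star decompositions. Second, the set of $L$-star decompositions is nonempty whenever $\tau(G)\le L\le m$, by Lemma~\ref{lem:tau}. Because $\eta_{\rm s}\in(0,1]$ and $\gamma_e>0$, every $h(S)$ is positive and finite, so $H_G(L)\in(0,\infty)$ and the division is well defined. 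This gives Eq.~\eqref{eq:hardware-maxmin}. The supremum is actually attained: one takes a minimizing decomposition and applies the sufficiency half of Theorem~\ref{thm:hardware-region}, i.e.\ proportional branch allocation.

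For the third step, suppose $\eta_{\rm s}(k)=\eta^{\rm split}$ for every $k$. Then $\sum_S h(S)=(\eta^{\rm split})^{-1}\sum_{S}\sum_{e\in S}\gamma_e^{-1}$. The stars of a nonredundant decomposition partition $E(G)$ by Eq.~\eqref{eq:star-decomp}, so the double sum equals $\sum_{e\in E(G)}\gamma_e^{-1}$ for every decomposition. The minimum is therefore this common value, which gives Eq.~\eqref{eq:constant-H}, and the common rate does not depend on the decomposition. This is consistent with Eq.~\eqref{eq:linear-maxmin}.

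None of these steps is a real obstacle. The only point that needs care is interpreting the maximum common rate correctly as an optimum over a union of regions, which turns a maximum over decompositions into the minimum $H_G(L)$ in the denominator.
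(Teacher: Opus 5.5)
Your proposal is correct and takes the same approach as the paper, which gives no separate proof and relies on the remark that a common rate $R$ requires pair-generation rate $RH_G(L)$. You substitute $\phi_e(R)=R/\gamma_e$ into Theorem~\ref{thm:hardware-region}, minimize $\sum_S h(S)$ over the union of $L$-star decompositions, and, for constant $\eta_{\rm s}$, use that the stars partition $E(G)$; you also spell out attainment, nonemptiness for $\tau(G)\le L\le m$, and positivity of $H_G(L)$, which the paper leaves implicit.
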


\subsection{Complete meshes with identical edge gains}

We now specialize to a complete mesh with identical edge gains. Assume that every edge of $K_N$ has the same gain $\gamma$. A star with $k$ edges then has unit-rate cost $k/[\gamma\eta_{\rm s}(k)]$. Define
\begin{equation}
f(k)=\frac{k}{\eta_{\rm s}(k)}.
\label{eq:fk}
\end{equation}
The total number of assigned edges is fixed at $E_N$, so the optimal profile is governed by the increments of $f$. Assume that the sequence $f(1),\ldots,f(N-1)$ is discretely convex, meaning that
\begin{equation}
f(k+1)-f(k)
\ge f(k)-f(k-1),
\qquad 2\le k\le N-2.
\label{eq:discrete-convex}
\end{equation}
This is discrete convexity of the measured splitter-cost sequence.

Write
\begin{equation}
E_N=qL+s,
\qquad 0\le s<L.
\label{eq:division-profile}
\end{equation}
The balanced profile has $L-s$ stars of size $q$ and $s$ stars of size $q+1$.

\begin{theorem}[Minimum pair-generation requirement for $K_N$ under convex fan-out loss]
\label{thm:complete-hardware}
Suppose $f(k)=k/\eta_{\rm s}(k)$ is discretely convex. For $N-1\le L\le E_N$, the balanced profile in Eq.~\eqref{eq:division-profile} minimizes the total pair-generation requirement, and
\begin{equation}
H_{K_N}(L)=
\frac{1}{\gamma}
\left[
(L-s)\frac{q}{\eta_{\rm s}(q)}
+s\frac{q+1}{\eta_{\rm s}(q+1)}
\right].
\label{eq:complete-H}
\end{equation}
Consequently,
\begin{equation}
R_{\rm mm}^{\star}(K_N,L)=
\frac{B_{\rm tot}\gamma}
{(L-s)q/\eta_{\rm s}(q)
+s(q+1)/\eta_{\rm s}(q+1)}.
\label{eq:complete-hardware-rate}
\end{equation}
\end{theorem}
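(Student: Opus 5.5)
The argument has two parts: a lower bound obtained by treating the star-size profile as a discrete resource-allocation problem, and attainability of that bound via Corollary~\ref{cor:balanced-complete}. Since all gains equal $\gamma$, Eq.~\eqref{eq:star-hardware-cost} gives $h(S)=f(k_S)/\gamma$. For any $L$-star decomposition $\Dset$ of $K_N$, the total cost therefore depends only on the size vector $(k_S)_{S\in\Dset}$, which satisfies
\begin{equation*}
k_S\ge 1,\qquad \sum_S k_S=E_N,\qquad k_S\le N-1 .
\end{equation*}
The bound $k_S\le N-1$ holds because a star in $K_N$ has at most $N-1$ leaves. This places every size inside the range on which convexity is assumed. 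Hence
\begin{equation*}
\gamma H_{K_N}(L)\ \ge\ \min\Bigl\{\sum_{i=1}^L f(k_i): k_i\in\{1,\dots,N-1\},\ \textstyle\sum_i k_i=E_N\Bigr\}.
\end{equation*}

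I will show that this integer minimum is attained at the balanced profile by an exchange argument. Take any profile that is not balanced. It then has two entries with $k_i\ge k_j+2$. Replace them by $k_i-1$ and $k_j+1$; the sum is preserved and the new entries stay in $[1,N-1]$. The change in cost is
\begin{equation*}
[f(k_j+1)-f(k_j)]-[f(k_i)-f(k_i-1)].
\end{equation*}
Discrete convexity, Eq.~\eqref{eq:discrete-convex}, says the increments $f(k+1)-f(k)$ are nondecreasing in $k$. Since $k_j+1\le k_i-1$, the first increment is at most the second, so the change is $\le 0$. The quantity $\sum_i k_i^2$ strictly decreases at each step, so finitely many steps reach a profile in which all entries differ by at most one. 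With fixed sum $E_N$ and $L$ parts, that profile is forced to be $L-s$ entries equal to $q$ and $s$ entries equal to $q+1$. Its cost is the bracket in Eq.~\eqref{eq:complete-H}. One minor check: convexity is only assumed for $2\le k\le N-2$. That range covers every increment comparison used, because all entries lie in $[1,N-1]$.

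For attainability, Corollary~\ref{cor:balanced-complete} provides, for every $N-1\le L\le E_N$, an actual nonredundant star decomposition of $K_N$ with exactly this balanced profile. The lower bound is therefore achieved, which gives Eq.~\eqref{eq:complete-H}. Eq.~\eqref{eq:complete-hardware-rate} then follows by substituting into Corollary~\ref{cor:hardware-maxmin}.

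The only real subtlety is separating two questions. The relaxation over size vectors ignores whether a given vector is realizable in $K_N$; this is harmless, because the relaxed optimum is shown to be realizable by Corollary~\ref{cor:balanced-complete}. That corollary in turn rests on Theorem~\ref{thm:complete} and Tarsi's theorem, so the combinatorial difficulty has already been handled there.
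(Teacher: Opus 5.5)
Your proposal is correct and follows essentially the same route as the paper. You reduce the cost to the star-size profile via $h(S)=f(k_S)/\gamma$, use the discrete-convexity exchange $(k_i,k_j)\mapsto(k_i-1,k_j+1)$ for $k_i\ge k_j+2$ to drive any profile to the balanced one, and then invoke the existence of a balanced star decomposition of $K_N$ together with Corollary~\ref{cor:hardware-maxmin}. Three details tighten the paper's sketch slightly: the termination argument via $\sum_i k_i^2$, the check that every increment comparison stays within the range where convexity is assumed, and citing Corollary~\ref{cor:balanced-complete}, which records the full balanced profile, rather than Theorem~\ref{thm:complete}, whose statement fixes only the largest star size.
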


\begin{proof}
We prove optimality by a balancing exchange. Let $k_1,\ldots,k_L$ be the star sizes of an $L$-layer decomposition. They are positive integers with $\sum_i k_i=E_N$. If $k_i\ge k_j+2$, discrete convexity gives
\begin{equation}
f(k_i)+f(k_j)
\ge f(k_i-1)+f(k_j+1).
\label{eq:convex-exchange}
\end{equation}
Thus balancing any pair of sizes that differ by at least two cannot increase the cost. Repeating this exchange at the level of the size profile yields the balanced profile in Eq.~\eqref{eq:division-profile}.

Theorem~\ref{thm:complete} guarantees that $K_N$ has a star decomposition with exactly this profile. Substituting the balanced sizes into Eq.~\eqref{eq:HG} gives Eq.~\eqref{eq:complete-H}, and Corollary~\ref{cor:hardware-maxmin} gives Eq.~\eqref{eq:complete-hardware-rate}.
\end{proof}

If $\eta_{\rm s}(k)$ is constant, $f$ is linear and every decomposition has the same cost. If $f$ is strictly convex, balanced star sizes are optimal up to their order. A concave range of the measured cost sequence can instead favour an unequal profile.

We have therefore reduced the one-sided hardware problem to the cost sequence $k/\eta_{\rm s}(k)$. We now turn to two-sided layers, where the two branch probabilities multiply.

\section{Two-sided pair-generation requirements and the eight-user comparison}
\label{sec:twosided-hardware}

We now quantify the pair-generation requirement of a two-sided layer. The topology analysis fixes which biclique layers are allowed, but a two-sided layer has a different pair-generation requirement from a star. Each edge receives one branch from each conjugate wavelength, so its routed pair flux contains the product of two branch probabilities. We first determine the cost of one biclique and then apply it to the eight-user architectures.

We first consider a single biclique in which the users in $A$ receive one wavelength and the users in $B$ receive its conjugate. Let $p=|A|$ and $q=|B|$. For $u\in A$, let $s_u^+$ be the useful branch probability on the first wavelength; for $v\in B$, let $s_v^-$ be the corresponding probability on the conjugate wavelength. If the two splitter trees have useful transmissions $\eta_+(p)$ and $\eta_-(q)$, then
\begin{equation}
\sum_{u\in A}s_u^+\le\eta_+(p),
\qquad
\sum_{v\in B}s_v^-\le\eta_-(q).
\label{eq:two-sided-splitter}
\end{equation}
The two photons are routed independently and detected without coherent recombination. The wavelength-layer label is retained, so different occurrences of the same edge remain distinguishable.

For target rate $R_{uv}$, write
\[
w_{uv}=\phi_{uv}(R_{uv})
\]
for the required pre-protocol flux. A layer of pair-generation rate $B$ delivers $Bs_u^+s_v^-$ to edge $uv$. We quantify the least pair-generation rate compatible with all edge demands in the biclique by
\begin{equation}
\Theta_{A,B}(\bm R)=
\min_{\substack{s_u^+,s_v^-\ge0\\
\sum_u s_u^+\le\eta_+(p)\\
\sum_v s_v^-\le\eta_-(q)}}
\max_{u\in A,\,v\in B}
\frac{w_{uv}}{s_u^+s_v^-}.
\label{eq:biclique-cost}
\end{equation}
A positive flux requirement divided by a zero branch probability is interpreted as $+\infty$.

When the bicliques form an edge partition, every edge is supplied by one layer and the layer costs add.

\begin{proposition}[Rate region for an edge-disjoint biclique decomposition]
\label{prop:two-sided-region}
The rate region of an edge-disjoint biclique decomposition $\mathcal P$ is
\begin{equation}
\Rset_{\mathcal P}^{\rm bi}(B_{\rm tot})=
\left\{\bm R\ge0:
\sum_{\Bic(A,B)\in\mathcal P}
\Theta_{A,B}(\bm R)
\le B_{\rm tot}
\right\}.
\label{eq:two-sided-region}
\end{equation}
\end{proposition}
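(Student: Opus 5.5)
We prove the two inclusions separately, in the same way as Theorem~\ref{thm:protocol-region}. The separable assumptions of Definition~\ref{def:separable-model} carry over: each biclique layer has its own independently adjustable pair-generation rate $B_{A,B}$, the layer label is retained, and edge $uv$ responds only to the flux delivered to it. Because $\mathcal P$ is an edge partition, each requested edge $uv$ lies in exactly one layer $\Bic(A,B)$. Its flux is therefore $B_{A,B}s_u^+s_v^-$, with no contribution from any other layer. This removes the need to aggregate fluxes across repeated occurrences, which is the only point where a cover would behave differently.

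\emph{Necessity.} Suppose a budget allocation $\{B_{A,B}\}$ with branch vectors $(\bm s^+,\bm s^-)$ satisfying Eq.~\eqref{eq:two-sided-splitter} achieves $\bm R$. By monotonicity and the definition of the generalized inverse in Eq.~\eqref{eq:inverse-response}, $F_{uv}(x)\ge R_{uv}$ forces $x\ge w_{uv}$. Hence $B_{A,B}s_u^+s_v^-\ge w_{uv}$ for every $u\in A$, $v\in B$. If some $w_{uv}>0$ meets a zero product, the requirement cannot be met, which is consistent with the convention $+\infty$. Thus
\begin{equation}
B_{A,B}\ge\max_{u,v}\frac{w_{uv}}{s_u^+s_v^-}\ge\Theta_{A,B}(\bm R).
\end{equation}
Summing over the layers and using $\sum B_{A,B}\le B_{\rm tot}$ places $\bm R$ in the stated set.

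\emph{Sufficiency.} Suppose $\sum\Theta_{A,B}(\bm R)\le B_{\rm tot}$. We choose minimizing branch vectors in each layer, set $B_{A,B}=\Theta_{A,B}(\bm R)$, and spend any leftover budget arbitrarily, which is harmless by monotonicity. Each edge then receives flux at least $w_{uv}$. Continuity and monotonicity of $F_{uv}$ give $F_{uv}(w_{uv})\ge R_{uv}$ whenever $w_{uv}$ is finite. When all $w_{uv}=0$ in a layer, the $0/0$ convention gives $\Theta=0$, and that layer can be left unused.

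The main obstacle is showing that the minimum in Eq.~\eqref{eq:biclique-cost} is actually attained. The feasible set is a compact product of two truncated simplices. The objective $\max_{u,v}w_{uv}/(s_u^+s_v^-)$, valued in $[0,\infty]$, is lower semicontinuous on this set, because each term $w/(xy)$ with the stated conventions is lower semicontinuous in $(x,y)\ge0$. A lower semicontinuous function attains its infimum on a compact set, so a minimizer exists.

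If $\Theta$ is infinite the region condition fails, and this matches necessity, so no extra care is needed. If one prefers to avoid these conventions, an alternative is to note that strictly positive branches with $\sum s_u^+=\eta_+(p)$ can be assumed whenever $\Theta$ is finite on the active vertices. The infimum then reduces to a minimization over a compact region bounded away from zero, where the objective is continuous.
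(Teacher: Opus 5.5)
Your proof is correct and follows the paper's own argument: necessity comes from bounding each layer's rate below by the maximum ratio and minimizing over branch vectors, and sufficiency comes from choosing an optimal splitter allocation in every layer, with edge-disjointness ensuring that no edge needs flux from two layers. The only difference is in the attainment step: you use lower semicontinuity of the extended-valued objective on the full product of simplices, whereas the paper first removes zero-demand rows and columns and argues that the objective diverges at the boundary; both justifications are valid.
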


\begin{proof}
We again prove necessity and sufficiency separately. For fixed branch probabilities, the pair-generation rate of a biclique layer must satisfy $B\ge w_{uv}/(s_u^+s_v^-)$ for every edge in that layer. It is therefore bounded below by the maximum in Eq.~\eqref{eq:biclique-cost}; minimizing over the branch probabilities gives $\Theta_{A,B}$. Summing over edge-disjoint layers proves necessity.

For sufficiency, remove rows, columns, and edges with zero demand. On the remaining support, the two splitter simplices are compact, and the objective diverges if a branch needed by a positive-demand edge approaches zero. Hence the minimum is attained. Choose an optimizing branch allocation in each biclique and assign the corresponding pair-generation rate $\Theta_{A,B}$. Since the bicliques are edge disjoint, all edge demands are met simultaneously.
\end{proof}

\subsection{Repeated edges and labelled streams}

We next allow repeated edges. The eight-layer metropolitan design is a biclique cover rather than a partition: four edges occur twice. A repeated edge can receive flux from both layers, but the two streams may have different visibility, background, or detection response. Their fluxes can be added before the edge protocol only under an explicit homogeneity condition.

For edge $e$, let $\boldsymbol w_e=(w_{\ell e})_{\ell:e\in E_\ell}$ denote the fluxes of its labelled occurrences, and let $F_e^{\rm joint}(\boldsymbol w_e)$ be the rate obtained when the streams are pooled. We formalize when labelled occurrences may be pooled. We say that the repeated occurrences of edge $e$ admit \emph{sum-dependent pooling} when there exists a one-variable response $\widetilde F_e$ such that
\[
F_e^{\rm joint}(\boldsymbol w_e)
=\widetilde F_e\!\left(\sum_{\ell:e\in E_\ell}w_{\ell e}\right)
\]
for every admissible allocation. Identical conditional states and detection responses are sufficient. If calibration is valid only for fixed proportions, those proportions must be imposed explicitly as $w_{\ell e}=a_{\ell e}w_e$, with $a_{\ell e}\ge0$ and $\sum_\ell a_{\ell e}=1$. Otherwise the full joint response must remain in the optimization.

Let $\mathcal P=\{\Bic(A_\ell,B_\ell)\}_{\ell=1}^{L}$ be a biclique cover. For an edge with sum-dependent pooling, use $F_e=\widetilde F_e$ in Eq.~\eqref{eq:inverse-response} and set $w_e=\phi_e(R_e)$. The allocations $w_{\ell e}\ge0$ must obey
\begin{equation}
\sum_{\ell:e\in E_\ell}w_{\ell e}\ge w_e
\qquad(e\in E).
\label{eq:cover-pooling}
\end{equation}
For layer $\ell$, write $\bm w_\ell=(w_{\ell e})_{e\in E_\ell}$ and let $\Theta_\ell(\bm w_\ell)$ be its minimum source pair-generation rate, obtained from Eq.~\eqref{eq:biclique-cost}.

\begin{theorem}[Repeated-cover flux allocation]
\label{thm:repeated-cover}
Assume that the wavelength-layer label is retained and that every repeated edge admits sum-dependent pooling. If its resolved occurrences are pooled before the edge protocol is applied, the exact simultaneous rate region of a fixed biclique cover is
\begin{equation}
\begin{aligned}
\Rset_{\mathcal P}^{\rm cov}(B_{\rm tot})
=\biggl\{\bm R\ge0:\;&
\min_{\substack{w_{\ell e}\ge0\\
\sum_{\ell:e\in E_\ell}w_{\ell e}\ge\phi_e(R_e)}}
\sum_{\ell=1}^{L}\Theta_\ell(\bm w_\ell)\\
&\le B_{\rm tot}\biggr\}.
\end{aligned}
\label{eq:repeated-cover-region}
\end{equation}
If the labelled streams are processed separately, no homogeneity assumption is needed. Let $F_{\ell e}$ be the calibrated response of the occurrence of edge $e$ in layer $\ell$, and choose a composably secure protocol with parameter $\eps_{\ell e}$ for every active stream. Concatenating the keys gives total rate at least $R_e$ whenever
\begin{equation}
\sum_{\ell:e\in E_\ell}F_{\ell e}(w_{\ell e})\ge R_e,
\label{eq:separate-stream-region}
\end{equation}
and the total distinguishing advantage obeys
\begin{equation}
\eps_{e}^{\rm tot}
\le
\sum_{\ell:e\in E_\ell}\eps_{\ell e}.
\label{eq:separate-stream-security}
\end{equation}
No gain from unresolved coherence between the streams is assumed.
\end{theorem}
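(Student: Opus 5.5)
The proof has two parts: the pooled case, which mirrors the necessity/sufficiency arguments of Theorem~\ref{thm:protocol-region} and Proposition~\ref{prop:two-sided-region} with an extra minimization over how each repeated edge's flux is split among its layers, and the separate-stream case, which reduces to additivity of rates and of composable security parameters.

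\emph{Pooled case, necessity.} Suppose $\bm R$ is achieved with layer rates $B_\ell$ and branch vectors $(s^+_{\ell},s^-_{\ell})$. Define the delivered occurrence fluxes $w_{\ell e}=B_\ell s^+_{\ell u}s^-_{\ell v}$ for $e=uv\in E_\ell$. Sum-dependent pooling gives a pooled rate of $\widetilde F_e(\sum_\ell w_{\ell e})$. This rate is at least $R_e$, so by the definition of the generalized inverse in Eq.~\eqref{eq:inverse-response}, $\sum_\ell w_{\ell e}\ge\phi_e(R_e)$. The allocation is therefore feasible for the inner minimum. For each layer, the chosen branch vector satisfies $B_\ell\ge w_{\ell e}/(s^+_{\ell u}s^-_{\ell v})$ for all its edges. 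Hence $B_\ell\ge\Theta_\ell(\bm w_\ell)$, exactly as in the necessity part of Proposition~\ref{prop:two-sided-region}. Summing over layers and then minimizing over allocations gives the inequality in Eq.~\eqref{eq:repeated-cover-region}.

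\emph{Pooled case, sufficiency.} I first need the inner minimum to be attained. Only finitely many edges carry positive demand. Each $\Theta_\ell$ is nonnegative, positively homogeneous of degree one, and nondecreasing in $\bm w_\ell$. It is also lower semicontinuous, since it is a minimum over compact simplices of a lower semicontinuous objective with the $+\infty$ convention. One can restrict to allocations with $w_{\ell e}\le\phi_e(R_e)$, because reducing any $w_{\ell e}$ down to this cap keeps the allocation feasible and does not increase the cost, by monotonicity. That restricted set is compact, so a minimizer exists.

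Given the minimizer, I choose an optimal branch allocation in each layer and set $B_\ell=\Theta_\ell(\bm w_\ell)$. This delivers at least $w_{\ell e}$ on every occurrence. Pooling then gives at least $\phi_e(R_e)$, and continuity and monotonicity yield $\widetilde F_e(\phi_e(R_e))\ge R_e$. This is the same step as in Theorem~\ref{thm:protocol-region}.

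One subtlety is that the layer must deliver the fluxes with the correct proportions among edges, not just enough total flux. Surplus flux is harmless here, because $\widetilde F_e$ is nondecreasing in the pooled sum.

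\emph{Separate-stream case.} Each active stream $(\ell,e)$ runs its own composably secure protocol on a labelled block, producing key at rate $F_{\ell e}(w_{\ell e})$ with parameter $\eps_{\ell e}$. Concatenating the blocks adds the rates, which gives Eq.~\eqref{eq:separate-stream-region}.

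For the security bound, I use a hybrid argument with the triangle inequality for trace distance. Replace the real key of each stream by an ideal key one stream at a time. Each replacement changes the distinguishing advantage by at most $\eps_{\ell e}$, because composable security survives any auxiliary systems, including the other streams and the shared public labels. Summing these changes gives Eq.~\eqref{eq:separate-stream-security}. No coherence between the streams is used.

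\emph{Main obstacle.} I expect the hardest step to be the attainment and semicontinuity argument for the nested minimum. The $+\infty$ convention at zero branches, and edges with zero demand, need the same support-reduction trick as in Proposition~\ref{prop:two-sided-region}.
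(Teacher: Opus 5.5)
Your proposal is correct and follows essentially the same route as the paper. Necessity goes through the induced occurrence fluxes $w_{\ell e}=B_\ell s^+_{\ell u}s^-_{\ell v}$ and $B_\ell\ge\Theta_\ell(\bm w_\ell)$; attainment comes from a monotone reduction to a compact feasible set together with lower semicontinuity of $\Theta_\ell$; and separate processing is handled by key concatenation plus the triangle inequality for composable security. Your cap $w_{\ell e}\le\phi_e(R_e)$ is a minor variant of the paper's reduction to equality in Eq.~\eqref{eq:cover-equality}. Your remark that rank-one delivery may harmlessly overshoot some $w_{\ell e}$ makes explicit a point the paper leaves implicit.
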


In Appendix~\ref{app:repeated-proof}, we separate the allocation of repeated-edge flux across layers from the optimal splitter allocation within each layer. We also prove attainment and composable separate processing there.

If the pooled response depends on how a fixed total flux is divided among occurrences, Eq.~\eqref{eq:repeated-cover-region} is not exact. One must retain $F_e^{\rm joint}(\boldsymbol w_e)$ or impose the calibrated proportions. Separate processing remains valid with the occurrence-dependent responses $F_{\ell e}$.

We now clarify the optimization structure. The pooled cover problem is generally nonconvex. A single biclique becomes convex after a logarithmic transformation, but forcing several rank-one layer contributions to add to a prescribed edge flux introduces a nonconvex superlevel constraint. Convexity is recovered when the splitter probabilities are fixed, giving a linear program in the layer pair-generation rates, or when the bicliques are edge disjoint, in which case the problem separates into geometric programs.

\subsection{Convex cost of a single biclique}

We now expose the convex structure of a single biclique. The product $s_u^+s_v^-$ makes the biclique cost nonlinear in the physical branch probabilities. After removing zero-demand rows and columns, a logarithmic change of variables gives a convex formulation. Write
\[
s_u^+=\eta_+(p)e^{-x_u},
\qquad
s_v^-=\eta_-(q)e^{-y_v},
\]
and define $t=\log[B\eta_+(p)\eta_-(q)]$. The edge constraints become
\begin{equation}
t\ge \log w_{uv}+x_u+y_v
\qquad(u\in A,\ v\in B),
\label{eq:two-sided-log1}
\end{equation}
while the splitter budgets become
\begin{equation}
\sum_{u\in A}e^{-x_u}\le1,
\qquad
\sum_{v\in B}e^{-y_v}\le1.
\label{eq:two-sided-log2}
\end{equation}
Minimizing $t$ subject to Eqs.~\eqref{eq:two-sided-log1} and~\eqref{eq:two-sided-log2} is a convex program. Equivalently, the original positive-variable formulation is a geometric program, and both give the same biclique cost.

We next derive a closed form when the required-flux matrix has rank one, $w_{uv}=\alpha_u\beta_v$. Allocating each splitter side in proportion to its factor saturates both splitter budgets and gives
\begin{equation}
\Theta_{A,B}(\bm R)=
\frac{
\left(\sum_{u\in A}\alpha_u\right)
\left(\sum_{v\in B}\beta_v\right)}
{\eta_+(p)\eta_-(q)}.
\label{eq:rankone-biclique-cost}
\end{equation}
Indeed,
\[
\max_u\frac{\alpha_u}{s_u^+}
\ge
\frac{\sum_u\alpha_u}{\sum_u s_u^+}
\ge
\frac{\sum_u\alpha_u}{\eta_+(p)},
\]
and the analogous inequality holds on side $B$. Their product gives the lower bound in Eq.~\eqref{eq:rankone-biclique-cost}, which proportional allocation attains.

We finally specialize to a homogeneous $p\times q$ biclique with linear gain $\gamma$ and common edge rate $R$. In this case, every edge requires flux $R/\gamma$. Hence
\begin{equation}
\Theta_{p,q}(R)=
\frac{Rpq}{\gamma\eta_+(p)\eta_-(q)}.
\label{eq:homogeneous-biclique-cost}
\end{equation}
The numerator is the total required edge flux; the denominator is the product of the useful transmissions of the two splitter trees.

\subsection{Pair-generation requirement of a balanced hierarchy}
\label{subsec:hierarchy-source-cost}

We now return to the balanced binary hierarchy and quantify its pair-generation requirement. The construction minimizes the number of layers and the receiver load, but the layers near the root split both wavelengths among large user sets. The resulting pair-generation requirement can be evaluated for arbitrary network size.

\begin{theorem}[Pair-generation requirement of a balanced hierarchy]
\label{thm:balanced-hierarchy-cost}
Let $N=2^h$. Assume that distributing one wavelength to $2^k$ users uses $k$ balanced $1\times2$ stages and that each stage has useful transmission $\etast\in(0,1]$. Also assume identical linear edge gain $\gamma$ and independently tunable layer pair-generation rates. Then the balanced binary hierarchy has pair-generation rate per unit common key rate
\begin{equation}
\mathcal H_{\rm bal}(N,\etast)
=
\frac{N^2}{4\gamma\etast^{2(h-1)}}
\sum_{d=0}^{h-1}
\left(\frac{\etast^2}{2}\right)^d.
\label{eq:balanced-hierarchy-cost}
\end{equation}
Equivalently,
\begin{equation}
\mathcal H_{\rm bal}(N,\etast)
=
\frac{N^2}{4\gamma\etast^{2(h-1)}}
\frac{1-(\etast^2/2)^h}{1-\etast^2/2}.
\label{eq:balanced-hierarchy-closed}
\end{equation}
At $\etast=1$, this reduces to $\binom N2/\gamma$.
\end{theorem}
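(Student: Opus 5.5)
The plan is to reduce the statement to the homogeneous biclique cost in Eq.~\eqref{eq:homogeneous-biclique-cost} and then sum over the internal nodes of the balanced tree. By Theorem~\ref{thm:binary-tree}, the balanced hierarchy is an edge-disjoint biclique partition. Proposition~\ref{prop:two-sided-region} then says the minimum total pair-generation rate for common edge rate $R$ is the sum of the per-layer costs $\Theta_{A_t,B_t}(R)$. It therefore suffices to list the layers by type, evaluate each cost, and sum.

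For $N=2^h$, the balanced full binary tree is perfect. At depth $d\in\{0,\ldots,h-1\}$ (root at $d=0$) it has $2^d$ internal nodes. Each such node splits its $2^{h-d}$ leaves into two children of $2^{h-d-1}$ leaves each, so the layer at that node is a $p\times q$ biclique with $p=q=2^{k}$, where $k=h-d-1$. Under the stated hardware model, each side is distributed through $k$ balanced $1\times2$ stages, giving $\eta_+(p)=\eta_-(q)=\etast^{k}$. For a homogeneous demand the rank-one formula, Eq.~\eqref{eq:rankone-biclique-cost}, applies with equal factors. Its optimal allocation is the uniform one, $s_u^{\pm}=\etast^k/2^k$, which is exactly what a balanced stage tree realizes, so fixed balanced splitters lose nothing here. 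Equation~\eqref{eq:homogeneous-biclique-cost} then gives a per-unit-rate layer cost of $4^{k}/(\gamma\etast^{2k})$.

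Summing over layers gives
\[
\mathcal H_{\rm bal}=\frac1\gamma\sum_{d=0}^{h-1}2^d\,\frac{4^{h-d-1}}{\etast^{2(h-d-1)}} .
\]
Factoring out $4^{h-1}=N^2/4$ and $\etast^{-2(h-1)}$ leaves $2^d4^{-d}\etast^{2d}=(\etast^2/2)^d$, which is Eq.~\eqref{eq:balanced-hierarchy-cost}. The closed form in Eq.~\eqref{eq:balanced-hierarchy-closed} is the finite geometric sum; its ratio satisfies $\etast^2/2\le1/2\neq1$. At $\etast=1$ the sum equals $2(1-2^{-h})=2(1-1/N)$, so the expression becomes $N^2(1-1/N)/(2\gamma)=\binom N2/\gamma$.

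The main point needing care is the identification $\eta_\pm(2^k)=\etast^k$ together with the claim that the optimal branch allocation is realizable. Optimality of the uniform split follows from the rank-one lower-bound argument preceding Eq.~\eqref{eq:rankone-biclique-cost}. Realizability holds because balanced stages deliver equal branches $\etast^k 2^{-k}$. The remaining steps are bookkeeping.
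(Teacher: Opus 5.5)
Your proposal is correct and follows essentially the same route as the paper. Both sum, depth by depth, the homogeneous biclique cost of Eq.~\eqref{eq:homogeneous-biclique-cost} over the $2^d$ layers of type $2^{h-d-1}\times 2^{h-d-1}$, each with coincidence transmission $\etast^{2(h-d-1)}$, and then evaluate the geometric series. You also make explicit two steps that the paper's proof leaves implicit: additivity of the layer costs comes from edge-disjointness via Proposition~\ref{prop:two-sided-region}, and the fixed balanced stages realize the optimal uniform rank-one allocation.
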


In Appendix~\ref{app:hierarchy-cost-proof}, we derive the result by summing the pair-generation requirement over the depths of the balanced tree.

For fixed $0<\etast<1$, the leading factor scales as
\begin{equation}
\mathcal H_{\rm bal}(N,\etast)
=
\Theta\!\left(
N^{2+2\log_2(1/\etast)}
\right).
\label{eq:balanced-hierarchy-scaling}
\end{equation}
We therefore find that logarithmic receiver load does not imply a quadratic pair-generation requirement once every additional splitter stage has loss.

\begin{figure*}[t]
\centering
\includegraphics[width=0.86\textwidth]{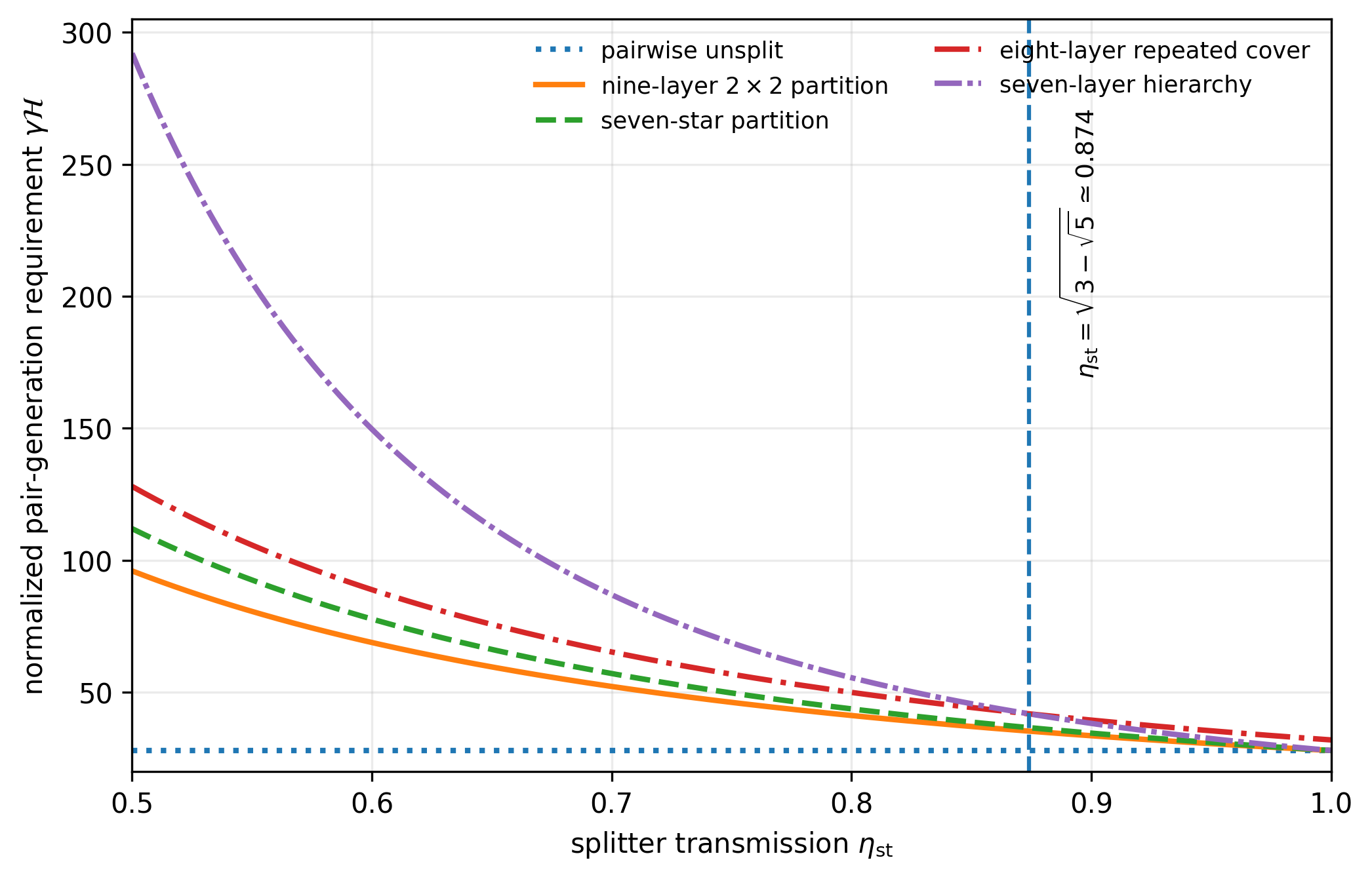}
\caption{Pair-generation requirement for the $K_8$ architectures under the independent-layer model with fixed balanced splitters. The vertical axis shows $\gamma\mathcal H$, the total input pair-generation requirement per unit common key rate normalized by the common edge gain. The curves are the closed-form requirements in Eqs.~\eqref{eq:k8-Hhier}--\eqref{eq:k8-H9}, together with the unsplit pairwise endpoint $\gamma\mathcal H=28$. The vertical line marks $\etast=\sqrt{3-\sqrt5}\simeq0.874$, where the seven-layer hierarchy and the eight-layer repeated cover exchange order.}
\label{fig:k8-source-crossover}
\end{figure*}

\subsection{Independent-layer source model with fixed balanced splitters}

Figure~\ref{fig:k8-source-crossover} summarizes the source-cost comparison derived in this subsection and marks the crossover in Eq.~\eqref{eq:k8-hierarchy-crossover}.

We now apply the preceding pair-generation formulas to the four fan-out architectures of Theorem~\ref{thm:k8-frontier} and determine how splitter loss changes their ordering by total pair-generation requirement. The pair-generation rate supplied to each occupied layer is adjustable independently. The comparison in this subsection uses fixed balanced $1\times2$ splitter stages: each stage divides the useful output equally between its two branches and has total useful transmission $\etast$. Let a balanced $1\times2$ splitter stage have useful transmission $\etast\in(0,1]$. A four-leaf star uses two stages on one wavelength, while a $2\times2$ biclique uses one stage on each wavelength. In either case, every routed pair crosses two stages and acquires coincidence transmission $\etast^2$.

Let $\mathcal H$ denote the total input pair-generation rate required for one unit of common key rate. The balanced hierarchy contains one $4\times4$ layer, two $2\times2$ layers, and four $1\times1$ layers. Its total pair-generation requirement is
\begin{equation}
\mathcal H_{7}^{\rm hier}
=
\frac{16}{\gamma\etast^4}
+
\frac{8}{\gamma\etast^2}
+
\frac{4}{\gamma}.
\label{eq:k8-Hhier}
\end{equation}
The $4\times4$ root layer crosses two stages on each wavelength, whereas each $2\times2$ layer crosses one stage on each side.

The seven-star architecture contains seven four-edge stars, and therefore
\begin{equation}
\mathcal H_{7}^{1\times4}
=7\frac{4}{\gamma\etast^2}
=\frac{28}{\gamma\etast^2}.
\label{eq:k8-H7}
\end{equation}

Every eight-layer side-size-at-most-two cover consists of eight $K_{2,2}$ layers, and each of those layers contains an edge that appears nowhere else. To see this, suppose one layer contained only repeated edges. Removing it would leave seven $K_{2,2}$ layers covering all $28$ edges. Because those layers contain exactly $28$ edge occurrences, they would form a $K_{2,2}$ partition of $K_8$. This is impossible: every $K_{2,2}$ contributes even degree at each incident vertex, whereas every vertex of $K_8$ has degree seven.

The unrepeated edge in each layer fixes that layer's rate under balanced splitting; the same rate sends equal flux to all four edges. Hence
\begin{equation}
\mathcal H_{8}^{2\times2,\mathrm{rep}}
=8\frac{4}{\gamma\etast^2}
=\frac{32}{\gamma\etast^2}.
\label{eq:k8-H8}
\end{equation}

The nine-layer partition in Eq.~\eqref{eq:k8-nine-bicliques} contains five $K_{2,2}$ layers and four $K_{1,2}$ layers. Its cost is
\begin{equation}
\mathcal H_{9}^{2\times2,\mathrm{nr}}
=5\frac{4}{\gamma\etast^2}
+4\frac{2}{\gamma\etast}
=\frac{20+8\etast}{\gamma\etast^2}.
\label{eq:k8-H9}
\end{equation}

A nonredundant nine-layer assignment with side size at most two has biclique sizes in $\{1,2,4\}$. If $n_j$ is the number of $j$-edge layers, then $n_1+n_2+n_4=9$ and $n_1+2n_2+4n_4=28$. The only nonnegative integer solutions are
\begin{equation}
(n_4,n_2,n_1)=(5,4,0)
\quad\text{or}\quad
(6,1,2).
\label{eq:k8-nine-profiles}
\end{equation}
The second profile requires $24/(\gamma\etast^2)+2/(\gamma\etast)+2/\gamma$. Its excess over Eq.~\eqref{eq:k8-H9} is
\begin{equation}
\frac{2(1-\etast)(2-\etast)}{\gamma\etast^2}\ge0.
\label{eq:k8-profile-gap}
\end{equation}
Hence the five-$K_{2,2}$, four-$K_{1,2}$ profile has the lowest pair-generation requirement among nine-layer nonredundant profiles with side size at most two.

We can now compare the four architectures directly.

\begin{theorem}[Eight-user pair-generation-rate ordering and crossover]
\label{thm:k8-source-phase}
For $0<\etast\le1$,
\begin{equation}
\mathcal H_{9}^{2\times2,\mathrm{nr}}
\le
\mathcal H_{7}^{1\times4}
\le
\mathcal H_{7}^{\mathrm{hier}},
\label{eq:k8-partition-order}
\end{equation}
with both inequalities strict for $\etast<1$. Moreover,
\begin{equation}
\mathcal H_{7}^{\mathrm{hier}}
\le
\mathcal H_{8}^{2\times2,\mathrm{rep}}
\quad\Longleftrightarrow\quad
\etast\ge\sqrt{3-\sqrt5}\simeq0.874.
\label{eq:k8-hierarchy-crossover}
\end{equation}
\end{theorem}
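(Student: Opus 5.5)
The plan is a direct comparison of the closed forms in Eqs.~\eqref{eq:k8-Hhier}--\eqref{eq:k8-H9}. The common factor $1/\gamma$ is positive, so it drops out. After that, each inequality becomes a polynomial inequality in $\etast\in(0,1]$. We multiply through by $\gamma\etast^4$ (or by $\gamma\etast^2$ when that is enough), which keeps the direction of each inequality.

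\emph{First inequality in Eq.~\eqref{eq:k8-partition-order}.} We compare $(20+8\etast)/\etast^2$ with $28/\etast^2$. Since $20+8\etast\le28$ is equivalent to $\etast\le1$, this holds with equality only at $\etast=1$.

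\emph{Second inequality in Eq.~\eqref{eq:k8-partition-order}.} We compare $28/\etast^2$ with $16/\etast^4+8/\etast^2+4$. Multiplying by $\etast^4$ gives
\[
28\etast^2\le16+8\etast^2+4\etast^4,
\]
that is, $4\etast^4-20\etast^2+16\ge0$. This factors as $4(\etast^2-1)(\etast^2-4)\ge0$. Both factors are nonpositive for $\etast\le1$, so the inequality holds. Equality occurs only at $\etast=1$, because $\etast^2-4<0$ throughout. This gives strictness for $\etast<1$.

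\emph{Crossover in Eq.~\eqref{eq:k8-hierarchy-crossover}.} The condition $\mathcal H_7^{\rm hier}\le\mathcal H_8^{2\times2,\mathrm{rep}}$ reads
\[
16/\etast^4+8/\etast^2+4\le32/\etast^2.
\]
Multiplying by $\etast^4/4$ gives $\etast^4-6\etast^2+4\le0$. Set $x=\etast^2\in(0,1]$. The quadratic $x^2-6x+4$ has roots $3\pm\sqrt5$, so it is nonpositive exactly on $[3-\sqrt5,3+\sqrt5]$. The larger root exceeds $1$, so within $(0,1]$ the condition is $x\ge3-\sqrt5$, equivalently $\etast\ge\sqrt{3-\sqrt5}$. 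Note that $3-\sqrt5\approx0.764<1$, so the threshold lies inside the interval, and a numerical evaluation gives $\approx0.874$.

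No step is a real obstacle once the cost formulas are taken as given from the preceding derivations. The only points that need care are sign bookkeeping when clearing denominators and correctly identifying which quadratic root lies in $(0,1]$.
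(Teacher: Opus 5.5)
Your proof is correct and follows essentially the paper's route: you clear the positive denominators in the closed forms, which gives the same factorization $4(1-\etast^2)(4-\etast^2)$ for the hierarchy versus the seven-star partition and the same quadratic $\etast^4-6\etast^2+4\le0$ for the crossover. The one difference is that you compare $\mathcal H_{9}^{2\times2,\mathrm{nr}}$ directly with $\mathcal H_{7}^{1\times4}$, whereas the paper's appendix compares it with $\mathcal H_{7}^{\mathrm{hier}}$; your comparison is the one the stated chain actually needs, because bounding both the nine-layer and seven-star costs by the hierarchy cost does not order those two costs relative to each other.
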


In Appendix~\ref{app:k8-cost-proof}, we prove the ordering and crossover by directly comparing the four closed-form requirements.

The hierarchy minimizes spectrum and receiver load, but its large root splitter can require a higher total pair-generation rate than the bounded-side designs. The unsplit edge-by-edge architecture uses $28$ layers and has pair-generation requirement $28/\gamma$, providing the no-fan-out endpoint.

We finally test whether the seven-versus-nine ordering survives splitter imbalance. Let $b\in(0,1/2]$ be the smaller normalized output fraction of a $1\times2$ splitter. The weakest branch then carries fraction $\etast b$ of the input. The two nonredundant costs become
\begin{equation}
\mathcal H_{7}^{1\times4}(b)
=\frac{7}{\gamma\etast^2b^2},
\qquad
\mathcal H_{9}^{2\times2,\mathrm{nr}}(b)
=\frac{5}{\gamma\etast^2b^2}
+\frac{4}{\gamma\etast b}.
\label{eq:k8-imbalance-costs}
\end{equation}
Their ratio is
\begin{equation}
\frac{\mathcal H_{9}^{2\times2,\mathrm{nr}}(b)}
{\mathcal H_{7}^{1\times4}(b)}
=\frac{5+4\etast b}{7}\le1.
\label{eq:k8-imbalance-ratio}
\end{equation}
Equality requires a lossless balanced splitter, $\etast=1$ and $b=1/2$. Any insertion loss or imbalance makes the nine-layer partition strictly cheaper while reducing the maximum receiver load from seven to four. The repeated eight-layer cover does not admit the same reduction because a duplicated edge can receive flux from two layers.

\section{Continuous-wave coincidence model}
\label{sec:cw}

We now move beyond the edge-separable allocation model. The preceding results assume that the response of each user link depends only on the pair flux delivered to that link. We consider a network driven by a continuous-wave (CW)-pumped entangled-photon source. Each wavelength channel is detected at the user to which it is routed. In a fan-out layer, the detector channel at one user can enter coincidence measurements with several users receiving the conjugate wavelength. The corresponding link responses can therefore depend on the traffic in the whole layer.

We first write the true-coincidence contribution of each labelled layer occurrence. Let $B_\ell\ge0$ be the pair-generation rate of spectral layer $\ell$. For a user $u$ receiving one of its wavelength channels, let $\Sigma_{\ell u}$ be the singles rate of the corresponding detector channel. If edge $e=\{u,v\}$ occurs in layer $\ell$, let $s_{\ell u}$ and $s_{\ell v}$ be the useful routing probabilities of the two conjugate wavelengths. The detected true-coincidence contribution before temporal filtering is
\begin{equation}
C_{\ell e}^{t}
=
B_\ell s_{\ell u}s_{\ell v}
\eta_{\ell u}\eta_{\ell v},
\label{eq:true-cw}
\end{equation}
where $\eta_{\ell u}$ and $\eta_{\ell v}$ include path transmission and detector efficiency at the two endpoint users.

We distinguish true and accidental coincidences explicitly. A true coincidence comes from the two photons of the same generated pair. An accidental coincidence occurs when unrelated detections at the detector channels of the two endpoint users fall within the same coincidence window. If occurrence $(\ell,e)$ is accepted in a window of width $t_{\ell e}^{\rm acc}$, its low-count accidental-coincidence rate is
\begin{equation}
A_{\ell e}
\simeq
\Sigma_{\ell u}\Sigma_{\ell v}
 t_{\ell e}^{\rm acc}.
\label{eq:accidental}
\end{equation}
Equation~\eqref{eq:accidental} is the low-count accidental-coincidence approximation used inside the declared CW response model. Equations~\eqref{eq:capture}--\eqref{eq:cw-key} are evaluated within this declared model and the calibrated source, splitter, and detector inputs.

The accepted window retains only part of the true-coincidence peak. Let $t_{\Delta,\ell e}$ be the FWHM of the relative detection-time distribution for the two endpoint channels. For a Gaussian relative-delay distribution, the capture probability is
\begin{equation}
\kappa_{\ell e}
=
\erf\!\left(
\frac{\sqrt{\ln2}\,t_{\ell e}^{\rm acc}}
{t_{\Delta,\ell e}}
\right).
\label{eq:capture}
\end{equation}
For an edge whose labelled occurrences are pooled under a common calibrated response,
\begin{equation}
C_e^{\rm acc}
=
\sum_{\ell:e\in E_\ell}
\kappa_{\ell e}C_{\ell e}^{t},
\qquad
A_e=
\sum_{\ell:e\in E_\ell}A_{\ell e},
\label{eq:cw-rates}
\end{equation}
and
\begin{equation}
M_e=C_e^{\rm acc}+A_e.
\label{eq:cw-measured}
\end{equation}
Let $e_{{\rm pol},\ell e}$ be the polarization-error probability of an accepted true coincidence from occurrence $(\ell,e)$. Since an accidental coincidence is uncorrelated and is wrong with probability $1/2$,
\begin{equation}
E_e
=
\frac{
\displaystyle\sum_{\ell:e\in E_\ell}
e_{{\rm pol},\ell e}\kappa_{\ell e}C_{\ell e}^{t}
+A_e/2
}{M_e}.
\label{eq:cw-qber}
\end{equation}
If all occurrences of edge $e$ have the same polarization-error probability $e_{{\rm pol},e}$, the numerator reduces to $e_{{\rm pol},e}C_e^{\rm acc}+A_e/2$. When separate bit- and phase-error estimates are unavailable, we use the symmetric proxy $E_e^b=E_e^p=E_e$. The asymptotic BBM92 rate is
\begin{equation}
R_e
=
q_eM_e
\left[
1-f_{{\rm EC},e}H_2(E_e^b)-H_2(E_e^p)
\right]_+.
\label{eq:cw-key}
\end{equation}
If $M_e=0$, the rate is set to zero by continuity and the QBER ratio is not formed.

Different wavelength channels are detected separately, but a detector channel can still participate in several pairwise coincidence measurements within the same fan-out layer. Its singles rate then enters the accidental rates of all those links. The CW response is therefore generally joint at the layer level. It reduces to the separable model of Definition~\ref{def:separable-model} only when accidental coincidences are negligible, when the relevant singles rates are fixed functions of the delivered edge flux, or when the coupling has already been included in a calibrated edge response.

\section{Common broadband source model}
\label{sec:published-benchmark}

We now replace independent layer control by one common broadband source. The independent-layer model allows the pair-generation rate of each occupied spectral layer to be adjusted separately. A single CW-pumped broadband source imposes a different constraint: one pump drives all occupied conjugate wavelength pairs, and the source spectrum fixes their relative pair-generation rates.

We quantify the source spectrum through calibrated relative pair-generation weights. Let $b_j>0$ be the calibrated relative pair-generation weight of candidate conjugate wavelength pair $j$. A wavelength assignment is an injection $\pi$ from the logical spectral layers of an architecture to distinct candidate pairs. The occupied layer rates are then
\begin{equation}
B_\ell(P,\pi)=P b_{\pi(\ell)},
\label{eq:common-source-layer}
\end{equation}
where $P\ge0$ is a common source-rate scale. The weights $b_j$ may be obtained from a measured channel-resolved source spectrum. No independent rescaling of one occupied layer is allowed within this model. For every occupied detector channel, let $\Sigma_{\ell u}(P,\pi)$ denote its calibrated singles-rate function, including the signal contribution, dark counts, and channel-specific background.

For a fixed architecture and injection $\pi$, we use Eqs.~\eqref{eq:true-cw}--\eqref{eq:cw-key} to determine the joint BBM92 rate vector as a function of $P$. For prescribed targets $R_e^{\rm tar}$, define
\begin{equation}
P_{\mathcal A}^{\star}(\bm R^{\rm tar})
=
\min_{\pi}\inf\left\{
P\ge0:R_e(P,\pi)\ge R_e^{\rm tar}\ \forall e\in E
\right\}.
\label{eq:common-source-optimization}
\end{equation}
The minimum is taken over the finite set of admissible wavelength injections. Equation~\eqref{eq:common-source-optimization} is the common-source counterpart of the independent-layer pair-generation optimization: it compares the same graph architectures when their layer rates are linked by the source spectrum.

To rank architectures numerically, we require measured spectral weights, endpoint losses, detector-channel backgrounds, timing distributions, and splitter parameters for the apparatus under study. We therefore retain Eq.~\eqref{eq:common-source-optimization} as a symbolic benchmark.

\section{Discussion}

The number of occupied spectral layers is only one part of the cost of passive wavelength sharing. The same wavelength assignment fixes how many channels enter each receiver, how widely each wavelength must be split, and whether requested links appear more than once. Our graph representation makes these resources explicit. For one-sided fan-out, orienting each requested link toward its chosen assignment and grouping the corresponding edges into fan-out-limited stars gives an exact optimization for an arbitrary network and fan-out limit. For complete networks, and for complete networks in which each user has one excluded partner, the optimum is governed by two constraints: enough centers are needed to cover the requested links, and the occupied layers must have sufficient total capacity to carry them. Balanced star profiles attain these bounds in the regimes identified above. When the measured splitter-cost sequence $f(k)=k/\eta_s(k)$ is discretely convex, the same balanced profiles also minimize the total pair-generation requirement for complete networks. Consequently, in this regime the decomposition is optimal for both layer allocation and source cost.

Furthermore, allowing both conjugate wavelengths to fan out changes more than just the attainable layer count. It separates spectral demand from receiver wavelength load. For a complete network, the balanced binary hierarchy uses the minimum $N-1$ spectral layers while reducing the maximum receiver load to $\lceil\log_2N\rceil$. This means that minimum spectrum and logarithmic receiver load can therefore be achieved simultaneously. The eight-user network shows why this does not produce a single preferred architecture. The seven-layer hierarchy has receiver load $\chi=3$, whereas every seven-layer star partition has $\chi=7$. We show that with at most two users on each side of a layer, eight layers require four repeated links. Using nine layers removes these repetitions while keeping the maximum receiver load at four.

In this work, we also show that splitter loss can alter the hierarchy implied by the topology alone. The balanced hierarchy makes this clear. Its receiver load grows only logarithmically with network size, but the layers near the root connect large groups of users. Each wavelength in these layers must therefore pass through several splitter stages before reaching the users, reducing the pair flux delivered to each link. More pair generation is then needed to maintain a fixed link rate. For a fixed splitter transmission ($0<\eta_{\rm st}<1$), the total pair-generation requirement scales as
$\Theta\!\left(N^{2+2\log_2(1/\eta_{\rm st})}\right)$.
Thus, a logarithmic receiver load can still be accompanied by a pair-generation requirement that grows faster than $N^2$.

The eight-user comparison makes this reversal concrete. Under independent control of the occupied spectral layers, the total source rate needed to reach prescribed key-rate targets is obtained from the splitter losses and the dependence of key rate on the pair flux delivered to each link. We show that the architecture with the fewest layers or the smallest receiver load need not minimize this source requirement. In particular, the nine-layer nonredundant bounded-side partition can require less total pair generation than architectures using fewer layers. The seven-layer hierarchy and the eight-layer repeated cover also exchange order at the splitter-transmission threshold derived in Theorem~\ref{thm:k8-source-phase}. The point of this crossover is not that one of these architectures is universally best. It shows that a spectral advantage can disappear, or reappear, as the quality of the passive splitting changes.

A single broadband source changes the comparison because the occupied wavelength pairs are generated together rather than controlled independently. Their relative pair-generation rates follow the measured spectrum of the source, while changing the overall source brightness scales these rates together. The coincidence measurements are also coupled within a fan-out layer. Although different wavelength channels are detected separately, the same detector channel for one user can contribute to the coincidence measurements of several user pairs. Its singles rate, therefore, affects the accidental-coincidence rate of more than one link. True coincidences come from the two photons of the same generated pair, whereas accidental coincidences arise when unrelated detections fall within the same coincidence window. Equation~\eqref{eq:common-source-optimization} includes both the fixed spectral distribution of the source and this coupling between the coincidence measurements. Therefore, the resulting comparison between architectures depends on the actual setup and requires measured detector noise, coincidence timing, and splitter performance.

The results are exact at different stages of the analysis. The graph-theoretic results give exact limits on the wavelength architectures and their resources. The corresponding key-rate regions are exact under the stated physical assumptions when the relation between delivered pair flux and key rate is known exactly. It is important to notice that if the key-rate model provides only a guaranteed lower bound, our calculation gives rates that are certainly achievable, but the actual network may support higher rates. In the CW model, one wavelength channel received by a user can be correlated with the conjugate wavelength sent to several other users. The same time-tag record is therefore used in several pairwise coincidence calculations. Because the accidental-coincidence rate for each link depends on the singles rates of its two detection records, these link calculations are coupled, and we include this dependence explicitly. Equation~\eqref{eq:accidental} uses the low-count approximation for the accidental coincidences. The graph results are fixed by the wavelength architecture, whereas the predicted key rates also depend on the parameters of the experimental setup.

\section{Conclusion}

In this work, we emphasize that passive wavelength sharing should therefore not be judged by spectral-layer count alone. Using fewer layers can require higher receiver wavelength load, larger fan-out, repeated links, or more splitter stages. The graph results tell us which combinations are possible. The optical model gives the pair flux that reaches each link, and the BBM92 calculation gives the key rate obtained from that flux. The architecture that saves the most spectrum need not minimize the source rate required to operate the network; splitter transmission, source spectrum, and receiver conditions can change the ranking. A spectral saving is useful only when the associated splitter loss and pair-generation requirement still allow the target key rates to be reached.

\begin{acknowledgments}
This work was carried out within the project skQCI--Slovak Quantum Communication Infrastructure, funded by the European Union under the Digital Europe Programme, call \mbox{DIGITAL-2021-QCI-01-DEPLOY-NATIONAL}, grant agreement No.~101091548, with 50\,\% national co-financing from the Recovery and Resilience Plan of Slovakia. The project forms part of the European Quantum Communication Infrastructure (EuroQCI) initiative. We thank David Polzoni for valuable discussions. M.Z. acknowledges support from the R4 scholarship 09I03-03-V04-00777 (QENTAPP). D.A. acknowledges support from fellowship No.~1156/01/01, funded by the European Union's Horizon~2020 research and innovation programme under the Marie Sk\l{}odowska-Curie grant agreement No.~945478. Views and opinions expressed are those of the authors only and do not necessarily reflect those of the European Union or the European Commission. Neither the European Union nor the granting authority can be held responsible for them.
\end{acknowledgments}

\appendix
\section{Supporting graph-theoretic proofs}
\label{app:graph-proofs}

The main text retains the statements and physical interpretation of the central resource laws. Here we collect the longer graph-theoretic verifications.

\subsection{Proof of Theorem~\ref{thm:dense}: Minimum-degree condition for balanced star profiles}
\label{app:dense-proof}

\begin{proof}
We verify the minimum-degree implication directly. The balanced profile is a list of $L$ positive star sizes with total $m$ and largest entry $r_L$. Tarsi's prescribed-star theorem applies when the largest prescribed star size does not exceed the edge-expansion parameter
\begin{equation}
\varphi(G)=
\min_{\varnothing\ne X\subsetneq V}
\frac12\left(\frac1{|X|}+\frac1{n-|X|}\right)
e(X,V\setminus X).
\label{eq:tarsi-expansion}
\end{equation}
It is enough to consider $x=|X|\le n/2$. The minimum-degree bound gives
\begin{equation}
e(X,V\setminus X)
\ge x\delta(G)-x(x-1)
=x[\delta(G)-x+1].
\end{equation}
Hence, under Eq.~\eqref{eq:dense-cond},
\begin{equation}
\frac12\left(\frac1x+\frac1{n-x}\right)e(X,V\setminus X)
\ge
\frac{n(n/2+r_L-x)}{2(n-x)}.
\end{equation}
The degree condition and $\delta(G)\le n-1$ imply $r_L\le n/2$, and
\begin{equation}
n(n/2+r_L-x)-2r_L(n-x)
=(n-2r_L)(n/2-x)\ge0.
\end{equation}
Therefore $\varphi(G)\ge r_L$, so Tarsi's theorem realizes the balanced list as an edge-disjoint star decomposition. Its largest star meets the lower bound in Eq.~\eqref{eq:basic-one-sided-bounds}.
\end{proof}

\subsection{Proof of Theorem~\ref{thm:cocktail}: One excluded partner per user}
\label{app:cocktail-proof}

\begin{proof}
We first recall the lower bounds and then construct a matching decomposition. The lower bounds follow from Eq.~\eqref{eq:cp-range}. Choose the endpoints of one deleted matching edge as the two users that never act as centers. Every other user remains adjacent to both of them. The remaining $N-2$ users induce an $(N-4)$-regular graph; since this degree is even, orient each component along an Euler tour, giving out-degree $(N-4)/2$ at every remaining user. The star centered at such a user contains its outgoing neighbors together with the two noncenters, and therefore has size $2+(N-4)/2=N/2$. These $N-2$ stars partition all requested edges.

Splitting one star gives $L=N-1$. For $L\ge N$, the balanced largest size is at most $N/2-1$, while $\delta(CP_N)=N-2$. Theorem~\ref{thm:dense} therefore applies and gives Eq.~\eqref{eq:cp-rho}. Inversion gives Eq.~\eqref{eq:cp-L}.
\end{proof}

\subsection{Proof of Theorem~\ref{thm:block-hierarchy}: Bounded-side block hierarchy}
\label{app:block-proof}

\begin{proof}
We partition the users into $g$ nonempty blocks $V_1,\ldots,V_g$ of sizes $n_i\le r$. For every $i<j$, include $\Bic(V_i,V_j)$. These $\binom g2$ layers partition the edges joining different blocks. Inside each $V_i$, use the balanced binary-tree partition of $K_{n_i}$, which has $n_i-1$ layers and maximum local incidence $\lceil\log_2n_i\rceil$. The internal partitions and cross-block bicliques are edge disjoint, so the full assignment is nonredundant. Its layer count is
\begin{equation}
\binom g2+\sum_{i=1}^{g}(n_i-1)
=
\binom g2+N-g.
\end{equation}
A user in block $V_i$ occurs in the $g-1$ cross-block layers incident on $V_i$ and in at most $\lceil\log_2n_i\rceil\le\lceil\log_2r\rceil$ internal layers. Finally, $g=N/r+O(1)$ gives Eq.~\eqref{eq:block-hierarchy-asymptotic}.
\end{proof}

\subsection{Proof of Theorem~\ref{thm:k8-frontier}: Eight-user resource points}
\label{app:k8-proof}

\begin{proof}
We prove the five claims in order. The first statement follows from Theorems~\ref{thm:complete-biclique-laws} and~\ref{thm:binary-tree}. Relabel the users by $0,\ldots,7$ and write $XY|ZW$ for $\Bic(\{X,Y\},\{Z,W\})$. An explicit balanced hierarchy is
\begin{equation}
0123|4567,
\quad 01|23,
\quad 0|1,
\quad 2|3,
\quad 45|67,
\quad 4|5,
\quad 6|7.
\label{eq:k8-hierarchical-bicliques}
\end{equation}
The root layer covers all edges between the two four-user halves. The remaining layers partition the edges inside each half. Every user appears in the root layer, one $2\times2$ layer, and one $1\times1$ layer, hence $\chi=3$.

For the seven-star construction, label seven users by $\mathbb Z_7$ and the eighth by $\infty$. Direct each edge on $\mathbb Z_7$ from $i$ to $i+1,i+2,i+3$ modulo seven and take
\begin{equation}
S_i=S\bigl(i,\{\infty,i+1,i+2,i+3\}\bigr).
\label{eq:k8-seven-stars}
\end{equation}
These seven stars of fan-out four partition all $28$ edges. In any seven-star partition, at most seven users act as centers. Two users cannot both be noncenters, because the edge between them would belong to no star. Hence exactly one user is a noncenter, and each of the remaining seven users centers exactly one of the seven stars. The noncenter $z$ must be a leaf in the star centered at each of its seven neighbors. Therefore $z$ appears in every layer and $\chi=7$.

For side size at most two, Eq.~\eqref{eq:biclique-load-bound} gives $\chi_u\ge4$ for every user. The eight users therefore require at least $32$ user--layer incidences, while one layer contains at most four users. Hence at least eight layers are necessary. At equality, every user occurs four times and every layer is a $K_{2,2}$ containing four edge occurrences. The total is $8\times4=32$, so
\begin{equation}
\Delta=32-28=4,
\qquad
\chi=4.
\label{eq:k8-eight-forced}
\end{equation}
The cover
\begin{align}
&AE|BF,
\quad AE|CG,
\quad AE|DH,
\quad BF|CG,
\nonumber\\
&BF|DH,
\quad CG|DH,
\quad AF|BE,
\quad CH|DG
\label{eq:k8-eight-cover}
\end{align}
attains the bound. The edges $AB$, $CD$, $EF$, and $GH$ occur twice; all others occur once.

Equation~\eqref{eq:k8-eight-forced} rules out a nonredundant eight-layer assignment with side size at most two. The nine bicliques
\begin{align}
&1|02,
\quad 2|36,
\quad 6|17,
\quad 7|45,
\nonumber\\
&01|47,
\quad 04|23,
\quad 04|56,
\quad 16|35,
\quad 23|57
\label{eq:k8-nine-bicliques}
\end{align}
partition all $28$ edges, and every user appears in four layers.

Finally, a $1\times1$ layer carries one edge, so all $28$ edges require separate layers. Every user is incident on seven of them.
\end{proof}

\section{Supporting pair-generation proofs}
\label{app:physical-proofs}

Here we collect the longer optimization and algebraic proofs used in the source-allocation analysis.

\subsection{Proof of Theorem~\ref{thm:hardware-region}: Tunable fan-out-dependent splitters}
\label{app:hardware-proof}

\begin{proof}
We prove necessity first and then construct an attaining allocation. Let $B_S$ be the input pair-generation rate of star $S$, and let $x_e=B_Ss_e$ be the flux delivered to edge $e$. Equation~\eqref{eq:fan-out-budget} gives
\begin{equation}
\sum_{e\in S}x_e
\le B_S\eta_{\rm s}(k_S).
\label{eq:hardware-layer-flux}
\end{equation}
Since $x_e\ge\phi_e(R_e)$,
\begin{equation}
B_S\ge
B_S^{\min}(\bm R)
:=
\frac{1}{\eta_{\rm s}(k_S)}
\sum_{e\in S}\phi_e(R_e).
\label{eq:min-star-pair-generation-rate}
\end{equation}
Summing these bounds proves necessity.

For sufficiency, assign star $S$ the pair-generation rate in Eq.~\eqref{eq:min-star-pair-generation-rate} and choose
\begin{equation}
s_e=
\eta_{\rm s}(k_S)
\frac{\phi_e(R_e)}{
\sum_{f\in S}\phi_f(R_f)}.
\label{eq:hardware-branch-choice}
\end{equation}
Then $B_Ss_e=\phi_e(R_e)$ and the useful splitter budget is saturated. Stars with zero demand are assigned zero pair-generation rate. Hence every point in the stated region is attainable.
\end{proof}

\subsection{Proof of Theorem~\ref{thm:repeated-cover}: Repeated-cover flux allocation}
\label{app:repeated-proof}

\begin{proof}
We separate the cross-layer allocation from the within-layer splitter optimization. In layer $\ell$, the resolved flux delivered to edge $uv$ is
\begin{equation}
w_{\ell,uv}=B_\ell s_{\ell u}^{+}s_{\ell v}^{-}.
\label{eq:induced-cover-flux}
\end{equation}
Pooling the labelled occurrences of edge $e$ gives total flux $\sum_{\ell:e\in E_\ell}w_{\ell e}$. Sum-dependent pooling converts the target rate $R_e$ into the single requirement that this sum be at least $\phi_e(R_e)$. For a chosen allocation $\bm w_\ell$, layer $\ell$ requires pair-generation rate at least $\Theta_\ell(\bm w_\ell)$. Summing the layer costs proves necessity of Eq.~\eqref{eq:repeated-cover-region}.

The minimum is attained. Each $\Theta_\ell$ is nondecreasing in every assigned flux, so any strict excess in Eq.~\eqref{eq:cover-pooling} can be removed until
\begin{equation}
\sum_{\ell:e\in E_\ell}w_{\ell e}=\phi_e(R_e)
\label{eq:cover-equality}
\end{equation}
for every edge with finite demand. All relevant variables then lie in compact intervals. The layer cost is lower semicontinuous: from any bounded sequence of feasible allocations, compactness of the splitter simplices yields a convergent subsequence, and the constraints $Bs_u^+s_v^-\ge w_{uv}$ remain valid in the limit. The sum of layer costs therefore attains its minimum on the compact feasible set. Choosing a minimizing cross-layer allocation and an optimizing splitter allocation in every active layer proves sufficiency.

For separate processing, the layer label remains in the classical transcript. Apply the calibrated protocol $F_{\ell e}$ to each stream and concatenate the resulting keys. Rates add as in Eq.~\eqref{eq:separate-stream-region}, while sequential composition and the triangle inequality give Eq.~\eqref{eq:separate-stream-security}~\cite{BenOr2005}.
\end{proof}

\subsection{Proof of Theorem~\ref{thm:balanced-hierarchy-cost}: Balanced-hierarchy requirement}
\label{app:hierarchy-cost-proof}

\begin{proof}
We sum the requirement level by level. At depth $d$, the tree has $2^d$ internal nodes. Each child subtree contains $N/2^{d+1}$ users, so one layer carries a biclique with
\begin{equation}
p=q=\frac{N}{2^{d+1}}
\end{equation}
and $pq=N^2/4^{d+1}$ edges. Each wavelength passes through $h-d-1$ splitter stages, giving coincidence transmission $\etast^{2(h-d-1)}$. Equation~\eqref{eq:homogeneous-biclique-cost} therefore assigns the depth-$d$ layers total cost
\begin{equation}
\frac{N^2}{4\gamma\etast^{2(h-1)}}
\left(\frac{\etast^2}{2}\right)^d.
\end{equation}
Summing over $d=0,\ldots,h-1$ gives Eq.~\eqref{eq:balanced-hierarchy-cost}. Setting $\etast=1$ gives the geometric sum $2(1-1/N)$ and hence $N(N-1)/(2\gamma)$.
\end{proof}

\subsection{Proof of Theorem~\ref{thm:k8-source-phase}: Eight-user ordering and crossover}
\label{app:k8-cost-proof}

\begin{proof}
We compare the closed-form requirements pairwise. Subtracting the seven-star pair-generation requirement from Eq.~\eqref{eq:k8-Hhier} gives
\begin{equation}
\mathcal H_{7}^{\mathrm{hier}}
-
\mathcal H_{7}^{1\times4}
=
\frac{4(1-\etast^2)(4-\etast^2)}{\gamma\etast^4}\ge0.
\end{equation}
The difference between the hierarchy and the nine-layer partition is
\begin{equation}
\mathcal H_{7}^{\mathrm{hier}}
-
\mathcal H_{9}^{2\times2,\mathrm{nr}}
=
\frac{4(\etast-1)(\etast^3-\etast^2-4\etast-4)}{\gamma\etast^4},
\end{equation}
which is positive for $0<\etast<1$ and vanishes at $\etast=1$. Finally, comparing Eqs.~\eqref{eq:k8-Hhier} and~\eqref{eq:k8-H8} gives
\begin{equation}
\etast^4-6\etast^2+4\le0.
\end{equation}
On $0<\etast\le1$, this is equivalent to $\etast^2\ge3-\sqrt5$.
\end{proof}

\end{document}